\def\cc{\cellcolor}
\theoremstyle{plain}
\newtheorem{theorem}{Theorem}[section]
\newtheorem{proposition}[theorem]{Proposition}
\newtheorem{lemma}[theorem]{Lemma}
\newtheorem{corollary}[theorem]{Corollary}
\theoremstyle{definition}
\newtheorem{example}[theorem]{Example}
\def\x{{\mathbf x}}
\def\y{{\mathbf y}}
\def\cc{{\mathbf c}}
\def\0{{\mathbf 0}}
\def\1{{\mathbf 1}}
\def\g{{\mathbf g}}
\def\s{{\sigma}}
\def\g{{\gamma}}
\def\M{{\mathcal M}}
\def\D{{\mathbb D}}
\def\0{{\mathbf 0}}
\def\1{{\mathbf 1}}
\def\CC{{\mathcal C}}
\def\x{{\mathbf x}}
\def\y{{\mathbf y}}
\def\0{{\mathbf 0}}
\def\1{{\mathbf 1}}
\def\g{{\mathbf g}}
\def\s{{\sigma}}
\def\g{{\gamma}}
\def\M{{\mathcal M}}
\def\D{{\mathbb D}}
\def\0{{\mathbf 0}}
\def\1{{\mathbf 1}}
\def\CC{{\mathcal C}}
\newenvironment{smatrix}{\left(\begin{smallmatrix}}{\end{smallmatrix}\right)}
\begin{document}

	\title[ An improved upper bound on self-dual codes over finite fields $GF(11), GF(19)$, and $GF(23)$ ]
	{An improved upper bound on self-dual codes over finite fields $GF(11), GF(19)$, and $GF(23)$ }

	\author{Whan-Hyuk Choi${}^{*}$ \and Jon-Lark Kim${}^{**\dag}$   }

	\address{Department of Mathematics, Sogang University, Seoul 04107, Republic of Korea, \ Email: whchoi@kangwon.ac.kr}
	
	\address{Department of Mathematics, Sogang University, Seoul 04107, Republic of Korea, \ Email: jlkim@sogang.ac.kr }
	
	\date{}
	\subjclass[2000]{Primary: 94B05, Secondary: 11T71}
	
	\keywords{symmetric self-dual code, optimal codes, self-dual codes, symmetric generator matrix, anti-orthogonal matrix}

	\thanks{\tiny ${}^{\dag}$Corresponding author. \\
		The author${}^{*}$ is supported by the
		National Research Foundation of Korea (NRF) grant funded by the
		Korea government (NRF-2019R1I1A1A01057755).
		The author${}^{**}$ is supported by the
		National Research Foundation of Korea (NRF) grant funded by the
		Korea government (NRF-2019R1A2C1088676).}


	\maketitle

\begin{abstract}
	This paper gives new methods of constructing {\it symmetric self-dual codes} over a finite field $GF(q)$ where $q$ is a power of an odd prime. These methods are motivated by the well-known Pless symmetry codes and quadratic double circulant codes. Using these methods, we construct an amount of symmetric self-dual codes over $GF(11)$, $GF(19)$, and $GF(23)$ of every length less than 42. We also find 153 {\it new} self-dual codes up to equivalence: they are $[32, 16, 12]$, $[36, 18, 13]$, and $[40, 20,14]$ codes over $GF(11)$, $[36, 18, 14]$ and $[40, 20, 15]$ codes over $GF(19)$, and $[32, 16, 12]$, $[36, 18, 14]$, and $[40, 20, 15]$ codes over $GF(23)$. They all have new parameters with respect to self-dual codes. Consequently, we improve bounds on the highest minimum distance of self-dual codes, which have not been significantly updated for almost two decades.	
\end{abstract}
	
	
	\section{Introduction}
	Coding theory, one of the most interesting areas of applied mathematics, was born almost simultaneously with the invention of modern computers - the beginning of the error-correcting code came from Claude Shannon's paper ``A mathematical theory of communication'' in 1948, and Richard W. Hamming's paper ``Error detecting and error correcting codes'' in 1950. These days, binary and nonbinary codes such as $q$-ary Hamming codes, the binary and ternary Golay codes, and $q$-ary Reed-Solomon codes are used in internet communication, GPS signals, mobile phones, and computer devices.
	It is well known that error-correcting codes are closely related to cryptography \cite{calkavur2020},~\cite{Sendrier2017}.
	Moreover, researchers have recently started investigating the relation between error-correcting codes and deep learning.~\cite {BeEry2019},~\cite{Huang2019}.
	
	On the other hand, self-dual codes have been the subject of much interest and are regarded as one of the most important classes of error-correcting codes. This is because of both theoretical reason and connections to various fields of mathematics such as designs \cite{Harada2017}, lattices \cite{Bannai1999}, sphere-packings \cite{Conway1999}, and modular forms \cite{bernhard1996}.
	
	Among various research topics of self-dual codes, it has attained an extensive research effort to find a {\it best} code; here, {\it best} refers to having the greatest error correction ability as possible. The error correction capability of a code depends on the minimum distance. Thus, it is crucial to find a method to construct codes having the highest minimum distance. To this end, various techniques are studied involving circulant and bordered circulant matrices \cite{betsumiya2003},~\cite{grassl2009} and quadratic double circulant matrices \cite{Gaborit2002}. Recently, families of codes over rings have been used to construct self-dual codes over finite fields \cite{Doughert2020},~\cite{kim2018}.
	
	Despite these efforts, there remain many codes to be found, missed by previous construction methods due to computation complexity. In particular, we hardly know about the optimal minimum distances of self-dual codes over finite fields of order $\ge$ 5 and of lengths $\ge$ 22. 
	In this case, only the possible bounds of highest minimum distances are known so far. For example, in the case of codes over $GF(11)$, the bounds of highest minimum distances of lengths $\le$ 40 are known, as we can see in Table \ref{previous_results_11}. Moreover, there is no information about the lower bound of the self-dual code of length 28.
	
	In 1972, Vera Pless introduced {\it Pless symmetry codes}, as a generalization of ternary extended Golay code \cite{VP1},~\cite{VP2}. Using this class of codes, Pless obtained many new optimal self-dual codes over $GF(3)$. Three decades later, Gaborit presented a generalization of Pless symmetry codes to different fields, {\it quadratic double circulant codes}~\cite{Gaborit2002}. He also found many new self-dual codes over $GF(4)$, $GF(5)$, $GF(7)$, and $GF(9)$. We want to remark two things: one is that these two methods used particular symmetric matrices to construct self-dual codes. The other is that these methods have a limitation of lengths; the possible lengths of codes are limited to $n+1$ or $n-1$ where $n$ is a power of an odd prime. Thus, there needs a new method to fill the gap between these lengths. These are the main motivation of this paper.
	
	In \cite{choi2020}, we introduced a method of {\it symmetric building-up construction}. If a self-dual code has a symmetric generator matrix, it is called {\it a symmetric self-dual code}. This method was to construct symmetric self-dual codes over $GF(q)$ for $q \equiv 1 \pmod 4$. In \cite{choi2020}, we showed that this method provides an efficient way to construct all symmetric self-dual codes over $GF(q)$, increasing lengths by two.
	Stimulated by this result, we have struggled to find a method when $q \equiv 3 \pmod 4$. However, it is not easy to generalize the method in \cite{choi2020}. In \cite{choi2020}, the square root of -1 plays the key role, but unfortunately, it is well-known that the square root of -1 does not exist in $GF(q)$ for $q \equiv 3 \pmod 4$.
	Nevertheless, we find two novel construction methods as follows :
	\begin{enumerate}
		{\bf \item[1.] Construction A}\\
		Let
		$(I_n \mid A)$
		be a generator matrix of a symmetric self-dual code of length $2n$ over $GF(q)$ and assume that $(\x_n , \y_n)$ is a codeword satisfying $\x_n \cdot \y_n = 0$ and $\x_n \cdot \x_n=k$ such that $-1\pm k$ are squares in $GF(q)$. And let $B= \left( \begin{matrix}
		\alpha \x_n + \beta \y_n \\
		\beta \x_n - \alpha \y_n 					
		\end{matrix}\right)$ where $\alpha^2+\beta^2 =-1$,
		$E= \frac{1}{k} ( s \x_n^T \x_n  + t \y_n^T \y_n -  \x_n^T \y_n - \y_n^T \x_n)$ where $s^2 = -1+k$ and $t^2=-1-k$ and let $D = -\frac{1}{k^2}B(A+E_1)B^T B B^T$. Then
		$$
		\left(\begin{array}{c|c|c|c}
		I_2 & O& D & B \\
		\hline
		O& I_n&B^T& A+E\\
		\end{array}\right)$$
		is a generator matrix of symmetric self-dual code of length $2n+4$.
		
		{\bf 	\item[2.] Construction B}\\
		Let
		$(I_n \mid A)$
		be a generator matrix of a symmetric self-dual code of length $2n$ over $GF(q)$, let $P=\begin{smatrix}
		\alpha & \beta \\ \beta & -\alpha
		\end{smatrix}
		$ be a $2 \times 2$ matrix such that $P^2=-I_2$, and let a matrix $M = \left( \begin{matrix}
		\x\\\beta^{-1} \x (A - \alpha I)
		\end{matrix}\right)$ for a vector $\x$ in $GF(q)^n$.
		Assume that $H$ is a $2\times 2$ symmetric matrix satisfying $(H+P)(H-P)=-M M^T$ and $H-P$ is non-singular. Then
		
		\qquad \qquad \qquad $
		\left(\begin{array}{c|c|c|c}
		I_2 & O& H & M\\
		\hline
		
		O& I_n&M^T& A+M^T (H-P)^{-1} M\\
	\end{array}\right)
	$
	
	is a generator matrix of symmetric self-dual code of length $2n+4$.
\end{enumerate}

Using these methods, we obtain many new self-dual codes. Consequently, we improve the bounds on the minimum distances of self-dual codes. We revised these results in Table \ref{our_results1}. In Table  \ref{our_results1}, new parameters are written in bold. Throughout this paper, $d_{sym}$ denotes the highest minimum distance of a symmetric self-dual code over $GF(p)$ and $d_{sd}$ denotes the previously best-known minimum distance of self-dual codes over $GF(p)$. More precisely,
we give {\it new} self-dual codes with highest minimum weights: they are $[32, 16, 12]$, $[36, 18, 13]$, and $[40, 20,14]$ codes over $GF(11)$, $[36, 18, 14]$ and $[40, 20, 15]$ codes over $GF(19)$, and $[32, 16, 12]$, $[36, 18, 14]$, and $[40, 20, 15]$ codes over $GF(23)$. We also provide numbers of new symmetric self-dual codes, up to equivalence, in Table \ref{numbers}.

\begin{table}[h]
\begin{center}
	\begin{small}
		\begin{tabular}{|c|r|r|r|r|r|r|}
			\hline
			$p$ &      \multicolumn{2}{c|}{$11$}   &	      \multicolumn{2}{c}{$19$}   &    \multicolumn{2}{|c|}{$23$}     \\
			\hline
			$n$
			&\multicolumn{1}{c|}{$d_{sym}$} 	& \multicolumn{1}{c|}{$d_{sd}$} &  \multicolumn{1}{c|}{$d_{sym}$} 	& \multicolumn{1}{c|}{$d_{sd}$} & \multicolumn{1}{c|}{$d_{sym}$} 	& \multicolumn{1}{c|}{$d_{sd}$}	  \\
			\hline
			4  &{\it 3} 	& 3 	&{\it 3}			&3 &{\it 3} 			&3 	\\\hline
			8  &{\it 5} 	& 5		&{\it 5}			&5  	&{\it 5} 			&5  		\\\hline
			12  &{\it 7} 	& 7 	&{\it 7}			&7  	&{\it 7} 			&7 		\\\hline
			16  &{\it 7} 	&9  	&{\it 8}			&8 	&{\it 8}			&9   			\\\hline
			20  &{\it 8} 	& 10 		&{\it 11}			&11  &{\it 9}	 		&10 		\\\hline
			24  &{\it 9} 	& 9 	&{\it 10}			&10  	&{\it 10} 		&13	 	\\ \hline
			28  &{\it 10}   & 10 			&{\it 11}			&11	   	&{\it 11}			&11 		\\\hline
			32  &\textbf{\textit{12}}   &?							&{\it 12}   		&14		&\textbf{\textit{12}} 		&?		\\\hline
			36  &\textbf{\textit{13}}   &$12$		&\textbf{\textit{14}}		&?		  				&\textbf{\textit{14}} 		&$12$	\\\hline
			40  &\textbf{\textit{14}}   & $13$	&\textbf{\textit{15}}		&?		  				&\textbf{\textit{15}} 		&$13$		\\\hline
			
		\end{tabular}
		\caption{The highest minimum distance $d_{sym}$ of symmetric self-dual codes vs. previously best known minimum distance $d_{sd}$ of self-dual codes \cite{betsumiya2003, choi2020, deboer1996, Gaborit2002, grassl2008, Gulliver2008, Shi2018}. New parameters are written in bold.}
		\label{our_results1}
	\end{small}
\end{center}	
\end{table}

\begin{table}
\begin{center}
	\begin{small}
		\begin{tabular}{|c|c|c|c|c|c|c|}
			\hline
			$p$ &      \multicolumn{2}{c|}{$11$}   &	      \multicolumn{2}{c}{$19$}   &    \multicolumn{2}{|c|}{$23$}     \\
			\hline
			$n$
			& $d_{sym}$         & \# of codes     &   $d_{sym}$         &  \# of codes &  $d_{sym}$   &  \# of codes  \\
			\hline
			32  &12			&$\ge 44$	&12 & $\ge 801$		&12			&$\ge 52$	\\\hline
			36 &13			&$\ge 16$    &14 &   $\ge 3$    &14			&$\ge 2$      	\\\hline
			40  &14			&$\ge 42$		&15 &$\ge 2$		&15			&$\ge 1$	\\\hline
			
		\end{tabular}
		\caption{Numbers of new symmetric self-dual code of length 32, 36 and 40}
		\label{numbers}
	\end{small}
\end{center}	
\end{table}

The paper is organized as follows. Section 2 gives preliminaries for self-dual codes over finite fields.
In Section 3, we present two construction methods for {\it symmetric self-dual codes} over $GF(q)$, where $q$ is an odd prime power. In Section 4, we give the improved bounds of highest minimum distances and the computational results of the best codes obtained using our new methods.
All computations in this paper were done with the computer algebra system \textsc{Magma} \cite{Magma}.

We use the following notations throughout this paper.

\begin{tabular}{  c  p{9cm} }

\textbf{Notations}& \\
\hline

$q$  & a power of an odd prime number \\
$GF(q)$ & finite field of order $q$\\
$d_{sym}$ & the highest minimum distance of symmetric self-dual codes \\
$d_{sd}$ & the previous best known minimum distance of self-dual codes \\
$I_n$ & the identity matrix of degree $n$ \\
$[n,k,d]_q$ code& a linear code of length $n$ and dimension $k$ over $GF(q)$ with minimum distance $d$\\
$A^{-1}$ & the inverse of a matrix $A$ \\
$A^T$ & the transpose of a matrix $A$ \\

\hline
\end{tabular}

\section{Preliminaries}

Let $n$ be a natural number, and $GF(q)$ be the finite field of order $q$ where $q$ is a prime power. A {\it linear code} $\CC$ of length $n$ and dimension $k$ over $GF(q)$ is a $k$-dimensional subspace of $GF(q)^n$. An element of $\CC$ is called a {\it codeword}. A {\it generator matrix} of $\CC$ is a matrix whose rows form a basis of $\CC$; therefore, a generator matrix of a linear code $\CC$ of length $n$ and dimension $k$ over $GF(q)$ is a $k \times n$ matrix over $GF(q)$.
For vectors $\x = (x_i )$ and $\y=(y_i)$ in $GF(q)^n$, we define the inner product $\x \cdot \y = \sum_{i=1}^{n} x_i y_i$. If vectors are identified with row matrices, the  inner product can also be written as a matrix multiplication $\x \cdot \y =\x \y ^T$, where $\y^T$ denotes the transpose of $\y$. For a linear code $\CC$, {\it dual code $\CC^{\perp}$} is defined as a set of orthogonal vectors of $\CC$, i.e., $$\CC^{\perp}=\{\x\in GF(q)^n \mid \x \cdot \cc =0 \text{ for all $\cc\in C$} \}.$$  A linear code $\CC$ is called {\it self-dual} if $\CC = \CC^{\perp}$ and {\it self-orthogonal} if $\CC \subset \CC^{\perp}$.

The {\it weight} of a codeword $\cc$ is the number of non-zero symbols in the codeword and denoted by $wt(\cc)$. The {\it Hamming distance} between two codewords $\x$ and $\y$ is defined by $d(\x,\y)=wt(\x-\y)$. The {\it minimum distance} of $\CC$, denoted by $d(\CC)$, is the smallest Hamming distance between distinct codewords in $\CC$. The minimum distance determines the error-capability; thus, the minimum distance is regarded as the most important parameter of a code. If a code has the minimum distance that meets some upper bounds, it is called an {\it optimal code}.
It is well-known \cite[chapter 2.4.]{HP3} that a linear code of length $n$ and dimension $k$ satisfy the Singleton bound, $$d(\CC) \le n - k +1.$$ A code that achieves the equality in the Singleton bound is called a \textit{maximum distance separable(MDS)} code. Obviously, a self-dual code of length $2n$ over $GF(q)$ is MDS if the minimum distance equals $n+1$. Although every MDS code is optimal, the MDS conjecture shows that there exists an MDS self-dual code of length $2n$ over $GF(q)$ only if $2n \le q+1$ \cite{Ball2012}. Therefore, if $2n > q+1$, the  minimum distance of self-dual code of length $2n$ over $GF(q)$ is upper bounded by $n$.

Let $I_n$ be a identity matrix of order $n$ and let $A^T$ denote the transpose of a matrix $A$. It is well-known that a self-dual code $\CC$ of length $2n$ over $GF(q)$ is equivalent to a code with a standard generator matrix
\begin{equation}\label{std-form}
\left(
\begin{array}{c|c}
	I_{n}& A
\end{array}
\right),
\end{equation} where $A$ is a $n \times n$ matrix satisfying $AA^T=-I_{n}$.

A matrix $A$ is called {\it symmetric} if $A^T=A$. If a self-dual code of length $2n$ over $GF(q)$ has a standard generator matrix $G=(I_n \mid A)$ where $A$ is symmetric, it is called {\it a symmetric self-dual code}.
Since the class of symmetric self-dual codes is a subclass of general self-dual codes, the bound on minimum distances of symmetric self-dual code may be different from that of self-dual codes. However, if a symmetric self-dual code has the same parameter as an optimal(resp. MDS) self-dual code, it is called a {\it optimal (resp. MDS) symmetric self-dual code}. If the minimum distance of a symmetric self-dual code meets the best known minimum distance of a self-dual code, it is called a {\it best symmetric self-dual code}.

In \cite{VP1}, Pless introduced {\it Pless symmetry codes} as a generalization of ternary extended Golay code and their construction method. As a result, Pless obtained optimal self-dual codes of length 24, 36, 48, and 60 over $GF(3)$. Later in \cite{Gaborit2002}, Gaborit presented a generalization of Pless symmetry codes to different fields, {\it quadratic double circulant codes} and their \linebreak construction method. Gaborit obtained many new self-dual codes over $GF(4)$, $GF(5)$, $GF(7)$ and $GF(9)$, and improved the bounds on the highest minimum \linebreak distances.
To use as a reference, we additionally obtain quadratic double circulant codes of lengths $\le 40$ over various finite fields, following the same construction method in \cite{Gaborit2002}. We present these codes in Table \ref{qdcodes}, following the same notations in \cite{Gaborit2002}.

\begin{table}[h]
\begin{center}
	\begin{small}
		\begin{tabular}{crrr|crrr}
			\hline
			length &$q$ & generator matrix &$d$ & length &$q$ & generator matrix &$d$  \\
			\hline

			28 & 11  & $\mathscr{S}_{13}(3,0)$ &  10 &   28 & 17  & $\mathscr{S}_{13}(2,0)$ &  10  \\
			28 & 19  & $\mathscr{S}_{13}(5,0)$ & 10  &  28 & 29  & $\mathscr{S}_{13}(4,0)$ &  10  \\

			\hline
			36 & 11  & $\mathscr{S}_{17}(4,0)$ &  12 &  36 & 13  & $\mathscr{S}_{17}(3,0)$ &  12   \\
			36 & 17  & $\mathscr{S}_{17}(7,0)$ & 12  &  36 & 23  & $\mathscr{S}_{17}(11,0)$ &  12   \\
			
			\hline
			40 & 11  & $\mathscr{S}_{19}(3,4)$ &  13   &	40 & 13  & $\mathscr{S}_{19}(2,4)$ &  13  \\
			40 & 17  & $\mathscr{S}_{19}(7,0)$ & 13   &	40 & 23  & $\mathscr{S}_{19}(2,0)$ &  13  \\
			40 & 29  & $\mathscr{S}_{19}(1,3)$ & 13  &&& & \\
			
			
			\hline
		\end{tabular}
	\end{small}
	\caption{New quadratic double circulant codes over $GF(q)$ obtained using methods in \cite{Gaborit2002}}
	\label{qdcodes}
\end{center}
\end{table}

We remark that a self-dual code in the class of Pless symmetry codes or quadratic double circulant codes is equivalent to a symmetric self-dual code.  In general, a pure double circulant self-dual code is equivalent to a symmetric self-dual code, and a bordered double circulant self-dual code is equivalent to a symmetric self-dual code under a certain condition. We discuss the equivalence between these codes in the next.

Let $S_n$ be a symmetric group of order $n$ and $\D^n$ be the set of diagonal matrices over $GF(q)$ of order $n$, $$ \D^n=\{diag(\g_i) \mid  \g_i\in GF(q), \g_i^2=1\}.$$ The group of all  \textit{$\gamma$-monomial transformations of length $n$},  $\M^n$  is defined by
$$\M^n=\{p_{\s} \gamma \mid \gamma \in\D^n, \s \in S_n\}
$$ where $p_{\sigma}$ is the permutation matrix corresponding $\sigma \in S_n$. We note that a $\gamma$-monomial transformation preserves the self-orthogonality of a code (see \cite[Thm 1.7.6]{HP3}).
Let $\CC\tau =\{\cc\tau \mid \cc\in \CC\}$ for an element $\tau$ in $\M^{n}$ and a code $\CC$ of length $n$. If there exists an element $\mu \in \M^{n}$ such that $\CC \mu=\CC'$ for two distinct codes $\CC$ and $\CC'$, then $\CC$ and $\CC'$ are called \textit{equivalent} and denoted by $\CC \simeq \CC'$ .

\begin{proposition}\label{M-equivalent}
Let $G=(    I_n \mid  A )$  and $G'=(    I_n \mid  B )$ be generator matrices of self-dual codes $\CC$ and $\CC'$ of length $2n$, respectively. If $A = \mu_1 B \mu_2$ for some $\mu_1, \mu_2 \in \M^n$, then $\CC$ and $\CC'$ are equivalent.

\begin{proof}
	For $\mu =\left(\begin{array}{c|c}
	\mu_1^{-1} &O \\
	\hline
	O & \mu_2			\end{array}\right) \in \M^{2n}$,
	$$(     I_n \mid  A )
	=(    I_n   \mid  \mu_1 B \mu_2)=(    \mu_1^{-1}   \mid  B \mu_2)
	= (      I_n \mid  B ) \mu. $$ Thus, $\CC$ and $\CC'$ are equivalent.
\end{proof}
\end{proposition}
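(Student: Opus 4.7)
The plan is to exhibit an explicit element $\mu$ of $\M^{2n}$ realizing the equivalence $\CC' \mu = \CC$. The shape of the hypothesis — a left monomial $\mu_1$ and a right monomial $\mu_2$ flanking $B$ — naturally suggests a block-diagonal candidate $\mu = \operatorname{diag}(\nu, \mu_2)$ acting on the $2n$ coordinates split as $n+n$, with $\nu$ chosen to undo the left action of $\mu_1$ after one further row operation on the generator matrix.

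First I would set $\nu = \mu_1^{-1}$ and verify that the resulting $\mu$ genuinely lies in $\M^{2n}$: since $\M^n$ is a group we have $\mu_1^{-1} \in \M^n$, and juxtaposing two $\gamma$-monomials on disjoint coordinate blocks produces a $\gamma$-monomial on the union (the underlying permutation is the concatenation of the two permutations, and the diagonal factor has its two $\D^n$ blocks placed inside $\D^{2n}$). Next I would compute $G' \mu = (I_n \mid B) \mu = (\mu_1^{-1} \mid B \mu_2)$, and observe that left multiplication by $\mu_1 \in GL_n(GF(q))$ yields $\mu_1 \cdot (\mu_1^{-1} \mid B \mu_2) = (I_n \mid \mu_1 B \mu_2) = (I_n \mid A) = G$. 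Since left multiplication by an invertible matrix is a change of basis on the rows and leaves the row span invariant, $G' \mu$ and $G$ generate the same code, which is exactly the required identity $\CC' \mu = \CC$.

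There is no substantive obstacle. The only care needed is to distinguish equality of generator matrices from equality of the codes they span, and to confirm that the block-diagonal monomial is a valid element of $\M^{2n}$ rather than merely of $GL_{2n}(GF(q))$. Both checks are routine, and the argument is essentially a formalization of the fact that column monomial operations on the parity-check part of a systematic generator matrix correspond to coordinate permutations and sign changes on the code itself.
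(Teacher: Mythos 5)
Your proposal is correct and follows essentially the same route as the paper: both exhibit the block-diagonal monomial $\mu=\left(\begin{smallmatrix}\mu_1^{-1} & O\\ O & \mu_2\end{smallmatrix}\right)\in\M^{2n}$ and observe that $(I_n\mid B)\mu=(\mu_1^{-1}\mid B\mu_2)$ is row-equivalent to $(I_n\mid A)$. Your version merely makes explicit the row-operation step (left multiplication by $\mu_1$) that the paper's chain of equalities leaves implicit.
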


\begin{corollary}\label{cor2.6}Let $I_n$ be the identity matrix of order $n$, $A$ is an $n\times n$ circulant matrix, $B$ is an $(n-1)\times (n-1)$ circulant matrix. Then,
\begin{enumerate}
	\item a pure double circulant code over $GF(q)$ with a generator matrix of the form $$(I_n \mid A)$$ is equivalent to a code with symmetric generator matrix, and
	
	\item a bordered double circulant code  over $GF(q)$ with a generator matrix of the form \[
	\left(
	\begin{array}{ccc}
	& \alpha & \beta \cdots \beta \\
	\raisebox{-10pt}{{\large\mbox{{$I_n$}}}} & \gamma \beta& \raisebox{-15pt}{{\large\mbox{{$A$}}}} \\[-4ex]
	& \vdots & \\[-0.5ex]
	& \gamma  \beta &
	\end{array}
	\right),
	\] where $\alpha$ and $\beta$ are elements in $GF(q)$ and $\gamma^2=1$, is equivalent to a code with symmetric generator matrix.
\end{enumerate}

\end{corollary}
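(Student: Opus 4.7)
The plan is to reduce both parts to Proposition~\ref{M-equivalent} by exhibiting, for each generator matrix $G = (I_n \mid M)$ in the statement, a monomial transformation $\mu$ such that $\mu M$ (or $\mu_1 M \mu_2$) is symmetric. Since Proposition~\ref{M-equivalent} converts such a matrix identity into an equivalence of the corresponding self-dual codes, the whole task collapses to a linear-algebra question about circulants.

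For part (i), I would use the following key observation about circulants. Write $A$ with entries $A_{ij} = a_{j-i \bmod n}$, where $(a_0, a_1, \ldots, a_{n-1})$ is its first row, and let $\sigma$ be the permutation of $\{0,1,\ldots,n-1\}$ given by $\sigma(i) = -i \bmod n$. A direct index computation gives
\[
(P_\sigma A)_{ij} = A_{\sigma(i),j} = a_{j+i \bmod n},
\]
which is manifestly symmetric in $i$ and $j$. Thus $A' := P_\sigma A$ is symmetric, and the identity $A = P_\sigma^{T} A'$ together with Proposition~\ref{M-equivalent} (with $\mu_1 = P_\sigma^{T}$ and $\mu_2 = I_n$) shows $\CC \simeq \CC'$, where $\CC'$ is generated by $(I_n \mid A')$. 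Self-duality of $\CC'$ is automatic, since $A'A'^{T} = P_\sigma A A^{T} P_\sigma^{T} = -P_\sigma P_\sigma^{T} = -I_n$.

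For part (ii), I would reduce to part (i) by stripping off the border with a sign-flip. Write the right-hand block as
\[
M = \begin{pmatrix} \alpha & \beta\, \vec{1}^{\,T} \\ \gamma\beta\, \vec{1} & A \end{pmatrix},
\]
where $\vec{1}$ is the all-ones column of length $n-1$. Let $D = \mathrm{diag}(1,\gamma,\gamma,\ldots,\gamma)$, which is a $\gamma$-monomial because $\gamma^2=1$. Multiplying on the left by $D$ rescales the lower $n-1$ rows by $\gamma$, and using $\gamma^2=1$ the off-diagonal border columns become $\beta\vec{1}$ while the interior block becomes $\gamma A$. Now extend the permutation $\sigma$ from part (i) (applied to the lower $(n-1)$-block) by fixing the first coordinate, and let $P$ be the resulting $n\times n$ permutation. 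Then $PD$ is a monomial and a direct check shows
\[
PDM \;=\; \begin{pmatrix} \alpha & \beta\, \vec{1}^{\,T} \\ \beta\, \vec{1} & \gamma P_\sigma A \end{pmatrix},
\]
whose lower block is symmetric by part (i), and whose border is visibly symmetric. Proposition~\ref{M-equivalent} with $\mu_1 = (PD)^{-1}$ and $\mu_2 = I_n$ finishes the argument, and self-duality of the new code is preserved exactly as in part (i).

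The arguments are short once the key permutation $\sigma(i) = -i \bmod n$ is spotted; the only place one must exercise care is bookkeeping with indexing conventions (so that $P_\sigma$ truly acts on $A$ from the left in the way needed) and verifying that the permutation of the lower block in part (ii) respects the constant-border structure so that the transformed border entries are still all equal. No genuine obstacle arises, since the monomial transformations involved are $\pm 1$ diagonal scalings composed with coordinate permutations, both of which preserve the self-dual property via $AA^{T} = -I_n$.
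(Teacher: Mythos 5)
Your proof is correct and is essentially the paper's argument: the paper likewise observes that a reversal permutation turns a circulant into a symmetric matrix (it reverses columns where you reverse rows, which is the same idea) and then invokes Proposition~\ref{M-equivalent}. Your treatment of the bordered case with the diagonal $\gamma$-scaling just makes explicit what the paper leaves to the reader.
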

\begin{proof}
It is clear that a column reversed matrix of a circulant matrix $A$ is symmetric. Thus, the corollary follows directly from Proposition \ref{M-equivalent}.
\end{proof}

Let $S_{-1}$ be a set of solutions of the equation $x^2 +y^2 =1$ over $GF(q)$. Then the cardinality of $S_{-1}$ for an odd prime $q$ is obtained in the next proposition.

\begin{proposition}[\cite{park2011classification}]\label{S_-1}  Let $GF(q)$ be a finite field of order $q$ such that $q$ is a power of an odd prime. The cardinality of the set
$$S_{-1}=\{(x,y)\in GF(q)^2\mid x^2+y^2+1=0\}
$$ is given by
$$|S_{-1}|=q-(-1)^{(q-1)/2}=\begin{cases}
q-1, &\text{if $q\equiv 1\pmod{4}$},\\
q+1, &\text{if $q\equiv 3\pmod4$}.
\end{cases}
$$
\end{proposition}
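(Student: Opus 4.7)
The plan is to count $|S_{-1}|$ by splitting on the residue of $q$ modulo $4$, exploiting the factorization $x^2+y^2 = (x+iy)(x-iy)$ where the status of $i = \sqrt{-1}$ differs between the two cases. By Euler's criterion, $-1$ is a square in $GF(q)$ exactly when $(-1)^{(q-1)/2}=1$, i.e.\ when $q \equiv 1 \pmod 4$, so this residue condition produces a natural dichotomy.

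\emph{Case 1: $q \equiv 1 \pmod 4$.} Here I would pick $i \in GF(q)$ with $i^2 = -1$ and perform the linear change of variables $(u,v) = (x+iy,\, x-iy)$. Since $q$ is odd, the determinant $-2i$ is nonzero, so this is a $GF(q)$-linear bijection on $GF(q)^2$. Under it, the equation $x^2+y^2 = -1$ becomes $uv = -1$, which has exactly $q-1$ solutions, one for each $u \in GF(q)^*$.

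\emph{Case 2: $q \equiv 3 \pmod 4$.} Now $T^2+1$ is irreducible over $GF(q)$, so $GF(q^2) = GF(q)[i]$ with $i^2 = -1$. I would invoke the field norm $N\colon GF(q^2)^{*} \to GF(q)^{*}$ defined by $N(\alpha) = \alpha^{q+1}$; since the Frobenius map sends $a+bi$ to $a-bi$, one has $N(a+bi) = (a+bi)(a-bi) = a^2+b^2$. It is standard that $N$ is a surjective group homomorphism between finite cyclic groups (a generator of $GF(q^2)^{*}$ is sent to an element of order $q-1$), so each fiber has size $|\ker N| = (q^2-1)/(q-1) = q+1$. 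Taking the fiber over $-1 \in GF(q)^{*}$ yields $|S_{-1}| = q+1$.

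Combining the two cases gives the unified formula $|S_{-1}| = q - (-1)^{(q-1)/2}$. There is no serious obstacle: both ingredients (the change-of-variables trick in Case 1 and surjectivity of the norm in Case 2) are completely standard. The only care points are ensuring the map $(x,y)\mapsto(u,v)$ in Case 1 is invertible — which is precisely why the hypothesis that $q$ is odd is needed — and correctly identifying the norm form with $a^2+b^2$ via the Frobenius in Case 2.
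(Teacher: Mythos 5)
Your proof is correct. Note that the paper itself gives no proof of this proposition --- it is quoted directly from the cited reference \cite{park2011classification} --- so there is no internal argument to compare against; but your two-case treatment is sound and standard. In Case 1 the change of variables $(u,v)=(x+iy,\,x-iy)$ has determinant $-2i\neq 0$ in odd characteristic and transforms the conic into the hyperbola $uv=-1$ with exactly $q-1$ points; in Case 2 the identification of $x^2+y^2$ with the norm form of $GF(q^2)/GF(q)$ (valid because the Frobenius sends $i$ to $i^q=-i$ when $q\equiv 3\pmod 4$) together with surjectivity of the norm gives fibers of size $(q^2-1)/(q-1)=q+1$, and since $-1\neq 0$ the fiber over $-1$ is in bijection with $S_{-1}$. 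Both cases assemble into the stated unified formula $q-(-1)^{(q-1)/2}$.
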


Similarly, we define a set $S_{-I_2}$ of $2 \times 2$ symmetric matrices over $GF(q)$  satisfying the matrix equation $X^2 + I_2 = 0$. We also obtain the cardinality of $S_{-I_2}$ in the following corollary.

\begin{corollary}\label{S_-I_2} Let $S_{-I_2}$ be a set of $2 \times 2$ symmetric matrices over $GF(q)$ where $q$ is a power of odd prime such that
$$S_{-I_2}=\{P\in K \mid P^2=-I_2\}.
$$ Then, the cardinality of $S_{-I_2}$ is given by
$$|S_{-I_2}|=q-(-1)^{( q-1)/2}=\begin{cases}
q-1, &\text{if $q\equiv 1\pmod{4}$},\\
q+1, &\text{if $q\equiv 3\pmod4$}.
\end{cases}
$$

\end{corollary}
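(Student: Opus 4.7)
The plan is to reduce the matrix equation $P^2=-I_2$ to the scalar equation already handled by Proposition \ref{S_-1}. Since the matrices $P$ under consideration arise in Construction B in the trace-zero symmetric form $P=\begin{smatrix}\alpha&\beta\\\beta&-\alpha\end{smatrix}$, the set $K$ in the statement should be read as the family of such matrices, parametrized by pairs $(\alpha,\beta)\in GF(q)^2$. My first step would be to compute $P^2$ by direct $2\times 2$ multiplication; the off-diagonal entries vanish because $\alpha\beta-\beta\alpha=0$, and both diagonal entries collapse to $\alpha^2+\beta^2$, giving $P^2=(\alpha^2+\beta^2)I_2$.

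Consequently, $P^2=-I_2$ is equivalent to the single scalar equation $\alpha^2+\beta^2+1=0$, i.e., $(\alpha,\beta)\in S_{-1}$. The assignment $P\longleftrightarrow(\alpha,\beta)$ is then an evident bijection between $S_{-I_2}$ and $S_{-1}$, so $|S_{-I_2}|=|S_{-1}|$. Plugging into Proposition \ref{S_-1} yields $|S_{-I_2}|=q-(-1)^{(q-1)/2}$, matching the two-case formula in the statement.

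I do not foresee any substantive obstacle: the proof is a direct corollary once $K$ is understood to be the trace-zero symmetric family used in Construction B, and the key computation is the elementary observation that this parametrization squares a matrix to a scalar multiple of the identity. The only point worth flagging is that without the trace-zero restriction one would also pick up the diagonal solutions $\mathrm{diag}(\pm i,\pm i)$ in the case $q\equiv 1\pmod 4$, changing the count; the proof therefore implicitly relies on the convention fixed by Construction B.
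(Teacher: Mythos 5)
Your proof is correct and takes essentially the same route as the paper: both reduce $P^2=-I_2$ to the scalar equation $\alpha^2+\beta^2=-1$ via the parametrization $P=\begin{smatrix}\alpha & \beta \\ \beta & -\alpha\end{smatrix}$ and then invoke Proposition \ref{S_-1}. The only difference is one of bookkeeping: the paper claims to \emph{derive} this trace-zero form from symmetry together with $P^2=-I_2$ (via $P^{-1}=-P$), whereas you take it as the definition of the otherwise undefined set $K$. Your closing caveat is in fact a sharper observation than anything in the paper's own argument: for $q\equiv 1\pmod 4$ the scalar matrices $\pm i I_2$ (with $i^2=-1$) are symmetric and square to $-I_2$ but are not of the stated form, so the paper's ``it is easy to show'' step is incomplete in that case and the count over \emph{all} symmetric solutions would be $q+1$ rather than $q-1$; this is harmless for the paper's use of the corollary, which concerns $q\equiv 3\pmod 4$, where no such scalar solutions exist.
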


\begin{proof}
The condition $P^2=-I_2$ implies that $P^{-1}= - P$. Since we assumed that $P$ is symmetric, it is easy to show that matrix $P$ is in the form $\begin{smatrix}
\alpha & \beta \\ \beta & - \alpha
\end{smatrix}
$, where  $(\alpha, \beta)$ is a solution of the equation $x^2 +y^2 =1$. Thus, the result follows with Proposition \ref{S_-1}.
\end{proof}

\section{Construction method of symmetric self-dual codes}

It is well-known that a self-dual over $GF(q)$ of length $n$ for $q \equiv 1 {\pmod 4}$ exists if and only if $n \equiv 0 {\pmod 2}$, and a self-dual over $GF(q)$ of length $n$ for $q \equiv 3 {\pmod 4}$ exists if and only if $n \equiv 0 {\pmod 4}$ \cite[Theorem 9.1.3]{HP3}. In \cite{choi2020}, we have introduced a construction method for symmetric self-dual codes over $GF(q)$ for $q \equiv 1 \pmod 4$. In this section, we introduce two new construction methods for symmetric self-dual codes over $GF(q)$ for $q \equiv 3 \pmod 4$. These methods generate symmetric self-dual codes of lengths increased by four.

\begin{theorem}[Construction method 1]{\label{SymBuildingup1}}
Let
$G=(I_n \mid A)$
be a generator matrix of symmetric self-dual code $\CC$ of length $2n$ over $GF(q)$ for an odd prime power $q$. Assume that there exists a codeword $(\x_n , \y_n)$ in $\CC$ satisfying $\x_n \cdot \y_n = 0$, $\x_n \cdot \x_n=k(\ne 0)$, and $-1\pm k$ are squares in $GF(q)$. Then, take an element $(\alpha, \beta)$ in $S_{-1}$ and let $B= \left( \begin{matrix}
\alpha \x_n + \beta \y_n \\
\beta \x_n - \alpha \y_n 					
\end{matrix}\right)$,
$E= \frac{1}{k} ( s \x_n^T \x_n  + t \y_n^T \y_n -  \x_n^T \y_n - \y_n^T \x_n)$ where $s^2 = -1+k$ and $t^2=-1-k$, and let $D = -\frac{1}{k^2}B(A+E_1)B^T B B^T$. Then
$$
G_1=	(I_{n+2} \mid A_1)=\left(\begin{array}{c|c|c|c}
I_2 & O& D & B \\
\hline
O& I_n&B^T& A+E\\
\end{array}\right)$$
is a generator matrix of a symmetric self-dual code of length $2n+4$.

\end{theorem}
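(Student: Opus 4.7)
The plan is to check directly that the $(n+2) \times (n+2)$ matrix $A_1$ in the statement is symmetric and satisfies $A_1 A_1^T = -I_{n+2}$; together these make $(I_{n+2} \mid A_1)$ a generator matrix of a symmetric self-dual code of length $2n+4$. Writing $M_1 = \x_n^T \x_n$, $M_2 = \y_n^T \y_n$, $M_3 = \x_n^T \y_n$, $M_4 = \y_n^T \x_n$, the matrix $E = \tfrac{1}{k}(sM_1 + tM_2 - M_3 - M_4)$ is symmetric since $M_3^T = M_4$, and $A + E$ is symmetric because $A$ is. A short expansion of the defining formula for $D$ produces
\[
D = \begin{smatrix} s\alpha^2 - t\beta^2 & (s+t)\alpha\beta \\ (s+t)\alpha\beta & s\beta^2 - t\alpha^2 \end{smatrix},
\]
which is again symmetric, so $A_1 = A_1^T$ and orthogonality reduces to $A_1^2 = -I_{n+2}$.

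Two structural facts make the block computation tractable. First, since $(\x_n, \y_n) \in \CC$ and the generator is $(I_n \mid A)$, we have $\y_n = \x_n A$, which together with self-orthogonality and $\x_n \cdot \x_n = k$ forces $\y_n \cdot \y_n = -k$ and yields the action of $A$ on outer products: $AM_1 = M_4$, $AM_2 = -M_3$, $AM_3 = M_2$, $AM_4 = -M_1$, with right-multiplication analogues coming from $A^2 = -I_n$. Second, the $M_i$ obey a closed rank-one multiplication table ($M_1 M_1 = kM_1$, $M_2 M_2 = -kM_2$, $M_3 M_4 = -kM_1$, $M_4 M_3 = kM_2$, with all other products equal to $\pm kM_j$ or zero according to whether the inner factor is $\x_n\x_n^T$, $\y_n\y_n^T$, or $\x_n\y_n^T$). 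Block multiplication of $A_1^2 = -I_{n+2}$ then splits into three identities: $B^T B + (A+E)^2 = -I_n$, $DB + B(A+E) = O$, and $D^2 + BB^T = -I_2$. For the first, $B^T B$ collapses to $(\alpha^2 + \beta^2)(M_1 + M_2) = -(M_1 + M_2)$, and the table together with $s^2 = k-1$, $t^2 = -k-1$ gives $AE + EA + E^2 = M_1 + M_2$, all cross-terms in $M_3$ and $M_4$ cancelling. For the second, a direct computation yields $B(A+E) = \begin{smatrix} s\alpha \x_n - t\beta \y_n \\ s\beta \x_n + t\alpha \y_n \end{smatrix}$, and matching this to $-DB$ entrywise uses only $\alpha^2 + \beta^2 = -1$.

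The main obstacle is the $2 \times 2$ identity $D^2 + BB^T = -I_2$. Expanding gives $D^2_{11} = -s^2\alpha^2 - t^2\beta^2$ and $(BB^T)_{11} = k(\alpha^2 - \beta^2)$, so the diagonal identity becomes $(k - s^2)\alpha^2 - (k + t^2)\beta^2 = -1 = \alpha^2 + \beta^2$; this holds precisely because $s^2 = -1 + k$ and $t^2 = -1 - k$, which is where the hypothesis that both $-1 \pm k$ be squares in $GF(q)$ is used in an essential way (it is also what guarantees $s, t$ exist at all). The off-diagonal identity reduces to $-(s^2 - t^2)\alpha\beta + 2k\alpha\beta = 0$, which follows from $s^2 - t^2 = 2k$. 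Once these scalar conditions are in place the verification is complete; the remaining bookkeeping is mechanical provided everything is organized around the $M_i$ multiplication table and the four rules for $AM_i$.
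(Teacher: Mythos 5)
Your proposal is correct and follows essentially the same route as the paper: reduce self-duality of $G_1$ to the three block identities $D^2+BB^T=-I_2$, $DB+B(A+E)=O$, $B^TB+(A+E)^2=-I_n$, and verify them using $\x_n A=\y_n$, $\y_n A=-\x_n$, $\y_n\cdot\y_n=-k$ together with $\alpha^2+\beta^2=-1$, $s^2=k-1$, $t^2=-k-1$; your explicit form of $D$ and the diagonal/off-diagonal cancellations match the paper's computation exactly. Your rank-one multiplication table for the $M_i$ is just a tidier bookkeeping device, and you in fact carry out the verification of the second and third identities, which the paper dismisses as ``similar computations.''
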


The proof of Theorem \ref{SymBuildingup1} is given in Appendix for the brevity. We need following two lemmas to introduce the second construction method.

\begin{lemma}{\label{GP1}}
Let $P=\begin{smatrix}
\alpha & \beta \\ \beta & -\alpha
\end{smatrix}
$ be an element in $S_{-I_2}$ and $A$ be a symmetric matrix satisfying $A^2 = - I_n$. For a vector $\x$ in $GF(q)^n$, if we let the matrix $M = \left( \begin{matrix}
\x\\\beta^{-1} \x (A - \alpha I)
\end{matrix}\right)$, then $$MA=PM.$$

\end{lemma}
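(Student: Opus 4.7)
The plan is to prove $MA=PM$ by direct block-row computation, using the two defining relations $A^2=-I_n$ (because $(I_n\mid A)$ generates a self-dual code, equivalent to $AA^T=-I_n$ together with symmetry $A^T=A$) and $\alpha^2+\beta^2=-1$ (which is forced by $P\in S_{-I_2}$, or equivalently $P^2=-I_2$, as recorded in Corollary \ref{S_-I_2}).

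First I would compute the two rows of $MA$. The top row is simply $\x A$. For the bottom row I would expand
\[
\beta^{-1}\x(A-\alpha I)A=\beta^{-1}\x(A^{2}-\alpha A)=\beta^{-1}\x(-I_n-\alpha A)=-\beta^{-1}\x-\alpha\beta^{-1}\x A,
\]
where the key substitution is $A^2=-I_n$.

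Next I would compute the two rows of $PM$ from the block definition of $P$. The top row becomes
\[
\alpha\x+\beta\bigl(\beta^{-1}\x(A-\alpha I)\bigr)=\alpha\x+\x A-\alpha\x=\x A,
\]
matching the top row of $MA$. The bottom row becomes
\[
\beta\x-\alpha\beta^{-1}\x(A-\alpha I)=\beta\x+\alpha^{2}\beta^{-1}\x-\alpha\beta^{-1}\x A=\beta^{-1}(\beta^{2}+\alpha^{2})\x-\alpha\beta^{-1}\x A.
\]
At this point I would invoke $\alpha^{2}+\beta^{2}=-1$ to replace the parenthesized quantity by $-1$, which gives exactly $-\beta^{-1}\x-\alpha\beta^{-1}\x A$, the bottom row of $MA$.

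There is no real obstacle: the whole lemma is two short identities. The only thing to be careful about is that $\beta^{-1}$ makes sense, which is implicit in the statement (one must have $\beta\neq 0$ for $M$ to be defined as written); in fact $\beta=0$ would force $\alpha^{2}=-1$, and whenever $-1$ is a non-square in $GF(q)$ (i.e., $q\equiv 3\pmod 4$, the case of interest in this paper) such a $P$ with $\beta=0$ cannot exist, so the hypothesis is consistent. With both rows matched, $MA=PM$ follows. \qed
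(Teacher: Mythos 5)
Your proof is correct and follows essentially the same route as the paper: a direct row-by-row verification of $MA=PM$ using only $A^2=-I_n$ and $\alpha^2+\beta^2=-1$ (the paper writes $\y=\beta^{-1}\x(A-\alpha I)$ and checks $\x A=\alpha\x+\beta\y$ and $\y A=\beta\x-\alpha\y$, which is the same computation organized slightly differently). Your remark that $\beta\neq 0$ is automatic when $q\equiv 3\pmod 4$ is a sensible added observation, though not needed beyond justifying that $M$ is well defined.
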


\begin{proof}
Let $\y =\beta^{-1} \x (A - \alpha I)$. Then $\beta \y =\x A - \alpha \x$ and this implies that $\x A = \alpha \x + \beta \y.$
On the other hand,

\begin{align*}
	\y(A+\alpha I) &=\beta^{-1} \x (A - \alpha I)(A+\alpha I)\\
	&=\beta^{-1} \x (A^2  - \alpha^2 I) \\
	&=\beta^{-1}\x (-1 - \alpha^2) I \\
	&=\beta \x, {\mbox{ since }} \alpha^2 + \beta^2 = -1
\end{align*}
and this implied that $\y A=\beta \x - \alpha \y $.
Therefore,
$$
M A= \left( \begin{matrix}
\x A\\ \y A
\end{matrix}\right) = \left( \begin{matrix}
\alpha \x + \beta \y \\ \beta \x - \alpha \y
\end{matrix}\right)  =PM.$$
\end{proof}

\begin{lemma}{\label{GP2}}
Assume that $n \times n$ matrices $H$ and $P$ are symmetric. If $(H+P)(H-P)$ is also symmetric, then $HP=PH$.

\end{lemma}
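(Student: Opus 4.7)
The plan is to unwind the hypothesis that $(H+P)(H-P)$ is symmetric by writing out its transpose and comparing. Since both $H$ and $P$ are symmetric, transposition reverses the order of the factors: $\bigl((H+P)(H-P)\bigr)^T = (H-P)^T(H+P)^T = (H-P)(H+P)$. So the symmetry hypothesis is equivalent to the identity
\[
(H+P)(H-P) \;=\; (H-P)(H+P).
\]

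Next I would expand both products. The left side gives $H^2 - HP + PH - P^2$, while the right side gives $H^2 + HP - PH - P^2$. Subtracting, the $H^2$ and $P^2$ terms cancel, leaving $-2HP + 2PH = 0$, i.e.\ $2(PH - HP) = 0$.

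Finally, since the ground field is $GF(q)$ with $q$ a power of an odd prime, $2$ is invertible, so we may divide and conclude $PH = HP$. This is the only place where the characteristic hypothesis is used, and it is an automatic feature of the setting, so no real obstacle appears; the lemma is essentially a direct transpose-and-expand argument. I would also note that the converse is immediate: if $HP=PH$ then $(H+P)(H-P) = H^2 - P^2$ is trivially symmetric, so the lemma actually records an equivalence, though only one direction is needed for the subsequent construction.
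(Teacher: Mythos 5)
Your argument is correct and is essentially the paper's own proof: both take the transpose of $(H+P)(H-P)$ using the symmetry of $H$ and $P$ to get $(H+P)(H-P)=(H-P)(H+P)$, then expand and cancel to obtain $2(PH-HP)=0$. Your explicit remark that dividing by $2$ uses the odd characteristic of $GF(q)$ is a detail the paper leaves implicit, but it does not change the argument.
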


\begin{proof}
By the assumption, we have
\begin{align*}
	(H-P)(H+P)&=\{(H-P)(H+P)\}^T \\
	&= (H+P)^T (H-P)^T \\
	&=(H+P)(H-P),
\end{align*} and by equating both sides, the result follows.\end{proof}

Now, we give the next theorem, which introduces the second construction method.

\begin{theorem}[Construction 2]{\label{SymBuildingup2}}
Let
$G=(I_n \mid A)$
be a generator matrix of a symmetric self-dual code $\CC$ of length $2n$ over $GF(p)$ for an odd prime $p$ and let $S_{-I_2}$ be the set defined in proposition \ref{S_-I_2}, and let $P=\begin{smatrix}
\alpha & \beta \\ \beta & -\alpha
\end{smatrix}
$ be an element in $S_{-I_2}$. Let $M = \left( \begin{matrix}
\x\\\beta^{-1} \x (A - \alpha I)
\end{matrix}\right)$ for a vector $\x$ in $GF(q)^n$.
Assume that $H$ is a $2\times 2$ symmetric matrix satisfying the equation \begin{equation}\label{HP=PH}(H+P)(H-P)=-M M^T,\end{equation} and $H-P$ is non-singular. Then
$$
G_2=	(I_{n+2} \mid A_2)=\left(\begin{array}{c|c|c|c}
I_2 & O& H & M\\
\hline

O& I_n&M^T& A+M^T (H-P)^{-1} M\\
\end{array}\right)$$
is a generator matrix of a symmetric self-dual code of length $2n+4$.

\end{theorem}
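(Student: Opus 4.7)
The plan is to show the two defining conditions for $(I_{n+2} \mid A_2)$ to generate a symmetric self-dual code: namely, that $A_2$ is symmetric and that $A_2 A_2^T = -I_{n+2}$. Symmetry is immediate: $H$ is symmetric by hypothesis, $A$ is symmetric since $\CC$ is a symmetric self-dual code, and $H-P$ is symmetric (as both $H$ and $P$ are), so $(H-P)^{-1}$ is symmetric and hence $M^T(H-P)^{-1}M$ is symmetric. Because $A_2$ is symmetric, verifying $A_2 A_2^T = -I_{n+2}$ reduces to verifying $A_2^2 = -I_{n+2}$, which I would check block by block.

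For the top-left $2\times 2$ block $H^2 + M M^T$: the hypothesis $(H+P)(H-P) = -MM^T$ makes the left side symmetric, so Lemma~\ref{GP2} applies and yields $HP = PH$. Combined with $P^2 = -I_2$, this gives $(H+P)(H-P) = H^2 + I_2$, hence $H^2 + MM^T = -I_2$ as required. For the top-right block $HM + M(A + M^T(H-P)^{-1}M)$: Lemma~\ref{GP1} gives $MA = PM$, so this becomes $(H+P)M + MM^T(H-P)^{-1}M$; substituting $MM^T = -(H+P)(H-P)$ turns the last summand into $-(H+P)M$, so the block vanishes. The bottom-left block vanishes for the same reason by transposition.

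The main work is the bottom-right block $M^T M + (A + K)^2$, where $K := M^T(H-P)^{-1}M$. Using $A^2 = -I_n$ (which follows from $AA^T = -I_n$ and the symmetry of $A$), it suffices to prove $AK + KA + K^2 = -M^T M$. From Lemma~\ref{GP1}, $MA = PM$, so taking transposes and using the symmetry of $A$ and $P$ gives $AM^T = M^T P$; moreover, $HP = PH$ forces $P$ to commute with $(H-P)^{-1}$. Consequently $AK = M^T P(H-P)^{-1}M = KA$, so $AK + KA = 2M^T P(H-P)^{-1}M$. Meanwhile
\begin{align*}
K^2 &= M^T(H-P)^{-1}\bigl(-(H+P)(H-P)\bigr)(H-P)^{-1}M \\
&= -M^T(H+P)(H-P)^{-1}M,
\end{align*}
and adding gives $M^T\bigl(2P - (H+P)\bigr)(H-P)^{-1}M = -M^T(H-P)(H-P)^{-1}M = -M^T M$, as desired.

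The main obstacle I expect is keeping the noncommutative algebra clean in the bottom-right block: one must carefully exploit $MA = PM$, $AM^T = M^T P$, and the fact that $P$ commutes with $(H-P)^{-1}$ (a consequence of Lemma~\ref{GP2}) in order to move $A$ past $M^T$ and $M$. Once these commutation relations are established, the identity $AK + KA + K^2 = -M^T M$ falls out after the single substitution $MM^T = -(H+P)(H-P)$.
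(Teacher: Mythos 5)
Your proposal is correct and follows essentially the same route as the paper's proof: the same block-by-block verification of $A_2 A_2^T=-I_{n+2}$, invoking Lemma~\ref{GP1} for $MA=PM$, Lemma~\ref{GP2} for $HP=PH$, and the substitution $MM^T=-(H+P)(H-P)$ in each block. Your handling of the bottom-right block (computing $AK+KA+K^2=M^T\bigl(2P-(H+P)\bigr)(H-P)^{-1}M=-M^TM$ directly) is just a slightly tidier rearrangement of the identical algebra the paper carries out.
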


The proof of Theorem \ref{SymBuildingup2} is also given in Appendix. We illustrate these new construction methods in the following examples.

\begin{example}\label{BU1}

Let $\CC_{3}^{8}$ be a symmetric optimal self-dual [8,4,3] code over $GF(3)$ with generator matrix
$$G=\begin{smatrix}
1& 0& 0& 0& 1& 1& 0& 0\\
0& 1& 0& 0& 1& 2& 0& 0\\
0& 0& 1& 0& 0& 0& 2& 1\\
0& 0& 0& 1& 0& 0& 1& 1
\end{smatrix}.$$
To apply construction method in Theorem \ref{SymBuildingup1}, take $(\alpha , \beta) =(1,1)$ and the codeword  $(\x_n| \y_n) = (2,1,1,1,0,1,0,2)$ in $\CC_{3}^{8}$. Then, we compute that
\begin{center}
$B=\begin{smatrix}
2& 2& 1& 0\\
2& 0& 1& 2
\end{smatrix}$,
$D=\begin{smatrix}
2& 1\\
1& 2
\end{smatrix}$, and
$E=\begin{smatrix}
0& 1& 0& 2\\
1& 2& 2& 2\\
0& 2& 0& 1\\
2& 2& 1& 0
\end{smatrix}$.		
\end{center}
Finally, we find an optimal symmetric self-dual [12,6,6] over $GF(3)$ code with generator matrix
$$
G_1= \begin{smatrix}
1& 0& 0& 0& 0& 0& 2& 1& 2& 2& 1& 0\\
0& 1& 0& 0& 0& 0& 1& 2& 2& 0& 1& 2\\
0& 0& 1& 0& 0& 0& 2& 2& 1& 2& 0& 2\\
0& 0& 0& 1& 0& 0& 2& 0& 2& 1& 2& 2\\
0& 0& 0& 0& 1& 0& 1& 1& 0& 2& 2& 2\\
0& 0& 0& 0& 0& 1& 0& 2& 2& 2& 2& 1\\
\end{smatrix}.$$
\end{example}

\begin{example}\label{ex3.6}
Let $\CC_{19}^{8}$ be a symmetric self-dual [8,4,3] code over $GF(19)$ with generator matrix
$$G=\begin{smatrix}
1& 0& 0& 0& 18& 13& 0& 0\\
0& 1& 0& 0& 13& 1& 0& 0\\
0& 0& 1& 0& 0& 0& 1& 6\\
0& 0& 0& 1& 0& 0& 6& 18
\end{smatrix}.$$
To apply construction method in Theorem \ref{SymBuildingup1}, take $(\alpha , \beta) =(18,6)$ and $\x = (1,6,9,6)$ in $GF(19)^4$. Then,
\begin{center}
$M=\begin{smatrix}
1& 6& 9& 6\\
13& 1& 9& 9
\end{smatrix}$ and
$H=\begin{smatrix}
9& 12\\
12& 13
\end{smatrix}$, 	
\end{center}
and finally, we obtain a symmetric [12,6,7] self-dual code over $GF(19)$ of length 12 with generator matrix
$$
G_2=	 \begin{smatrix}
1 & 0  &0 & 0 & 0 & 0 & 9 &12 & 1 & 6 & 9 & 6\\
0 & 1  &0 & 0 & 0 & 0 &12 &13& 13 & 1 & 9 & 9\\
0 & 0  &1 & 0 & 0 & 0 & 1 &13 & 7 &17 &13& 14\\
0 & 0 & 0 & 1 & 0 & 0 & 6 & 1 &17 &14 & 7 & 6\\
0 & 0 & 0 & 0  &1 & 0 & 9 & 9& 13&  7& 12 &11\\
0 & 0 & 0 & 0 & 0 & 1 & 6 & 9& 14 & 6& 11 & 2\\
\end{smatrix}.$$

\end{example}

\section{Computational results}
In this section, we discuss computational results of symmetric self-dual codes over $GF(q)$ for $q=11,19,23$. Using construction methods in Theorem \ref{SymBuildingup1} and \ref{SymBuildingup2}, we obtain many new symmetric self-dual codes of lengths $n \le 40$ which meet the best known bounds on minimum distances of self-dual codes.

We find best symmetric self-dual codes of length $n$ over $GF(q)$ for $q=11,19,23$ and $n\le 40$ except for the case that $q=11$ with $n=16$ or $20$, for the case that $q=19$ with $n=32$, and for the case that $q=23$ with $n=20$ or $24$. Moreover, we also find more than 151 self-dual codes with new parameters: 90 inequivalent self-dual codes of length 32, 36 and 40 over $GF(11)$, 5 inequivalent self-dual codes of length 36 and 40 over $GF(19)$ and 56 inequivalent self-dual codes of length 32, 26 and 40 over $GF(23)$. Among them, we introduce five symmetric self-dual codes with their generator matrices in this section.

At the end of this section, we summarize the known bounds on the highest minimum distances of self-dual codes in Table \ref{summary}.

%
%
%
%
%
%
%

\subsection{Symmetric self-dual codes over $GF(11)$. }

\begin{proposition}

There exist best symmetric self-dual codes over $GF(11)$ of length $n=4,8,12, 24, 28, 32, 36, 40$. In particular, $[4,2,3]_{11}, [8,4,5]_{11}$ and $[12,6,7]_{11}$ symmetric self-dual codes are MDS. Moreover, $[32,16,12]_{11}$, $[36,18,13]_{11}$ and $[40,20,14]_{11}$ codes are new.
\end{proposition}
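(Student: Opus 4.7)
The plan is to prove existence by an explicit, iterative construction, relying on Theorems \ref{SymBuildingup1} and \ref{SymBuildingup2} to raise the length by four at each step. The base case is a symmetric self-dual code of length $4$ over $GF(11)$: since $11 \equiv 3 \pmod 4$, Corollary \ref{S_-I_2} gives $|S_{-I_2}| = 12$, so any $P \in S_{-I_2}$ serves as the generator block $A$ of a symmetric $[4,2,3]_{11}$ code. This code is MDS because its minimum distance meets the Singleton bound $n-k+1 = 3$. From this seed I would apply either Construction 1 or Construction 2 repeatedly to produce symmetric self-dual codes of lengths $8, 12, 16, 20, 24, 28, 32, 36, 40$.

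At each step the task is a parameter search. For Construction 1 one must exhibit, inside the current code $\CC$, a codeword $(\x_n, \y_n)$ with $\x_n \cdot \y_n = 0$ and $\x_n \cdot \x_n = k \ne 0$ such that both $-1 + k$ and $-1 - k$ are squares in $GF(11)$, together with a pair $(\alpha, \beta) \in S_{-1}$. For Construction 2 one must choose $P \in S_{-I_2}$, a vector $\x \in GF(11)^n$, and a symmetric $2 \times 2$ matrix $H$ solving \eqref{HP=PH} with $H - P$ invertible. Any choice satisfying these hypotheses yields a symmetric self-dual code of length $n+4$ by the theorems; the minimum distance is then read off by computer. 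For $n = 8$ and $n = 12$ it suffices to hit the Singleton bound $n+1 = 5, 7$, giving the MDS codes $[8,4,5]_{11}$ and $[12,6,7]_{11}$. For the intermediate lengths $24$ and $28$ the target is the previously best-known $d_{sd}$ value, while for $n = 32, 36, 40$ the targets are $12, 13, 14$ respectively, each of which strictly exceeds the previously recorded best self-dual minimum distance (see Table \ref{our_results1}), thereby certifying the new codes.

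The principal obstacle is the size of the search space, which grows rapidly with $n$ and makes naive enumeration infeasible for $n \ge 32$. I would mitigate this in three ways: (i) filter candidate codewords $(\x_n,\y_n)$ in $\CC$ by the arithmetic conditions on $k$ so that Construction 1 is actually applicable, cutting the candidate pool substantially; (ii) prune during construction by computing a lower bound on the weight of low-weight codewords of $\CC \oplus$ (the new rows) before finalising the full weight enumerator; and (iii) execute everything in Magma, whose efficient routines for minimum distance and monomial equivalence allow both the targeted distance to be verified and duplicate codes to be discarded, so that the counts of inequivalent codes recorded in Table \ref{numbers} can be justified.

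Finally, once a code of each target length and minimum distance has been produced, the MDS claim at lengths $4, 8, 12$ follows from the Singleton bound together with the computed distances, and the newness claim for the $[32,16,12]_{11}$, $[36,18,13]_{11}$, and $[40,20,14]_{11}$ codes follows by direct comparison with the bibliographic entries summarised in Table \ref{our_results1}. Explicit generator matrices exhibiting these codes (together with the data $(\x_n,\y_n,\alpha,\beta)$ or $(\x, P, H)$ used at each inductive step) would be provided for the reader's verification, following the format of Examples \ref{BU1} and \ref{ex3.6}.
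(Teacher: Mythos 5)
Your proposal matches the paper's approach: the proposition is established by exhibition, i.e., by iteratively applying the constructions of Theorems \ref{SymBuildingup1} and \ref{SymBuildingup2} starting from short symmetric self-dual codes, verifying minimum distances by computer search in \textsc{Magma}, and presenting explicit symmetric generator matrices (as the paper does for the new $[32,16,12]_{11}$, $[36,18,13]_{11}$, and $[40,20,14]_{11}$ codes) with newness and the MDS claims read off from the Singleton bound and Table \ref{previous_results_11}. This is essentially the same argument the paper gives.
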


We give the highest minimum distance $d_{sym}$ of symmetric self-dual codes and the previously best known minimum distance $d_{sd}$ of self-dual codes in Table \ref{previous_results_11}. In this table, new parameters are written in bold. We present three symmetric self-dual codes having new parameters:

\begin{itemize}

\item $[32,16,12]_{11}$ code with a generator matrix $(I_{16} \mid A_{11}^{32})$ where \\
$A_{11}^{32}=\begin{smatrix}
6 & 7 & 7 & 1 & 2 & 8 & 5 & 9 & 9 & 8 & 1 & 6 & 4 & 7 & 10 & 6 \\
7 & 9 & 7 & 8 & 0 & 8 & 4 & 8 & 6 & 10 & 2 & 6 & 9 & 7 & 8 & 10 \\
7 & 7 & 6 & 0 & 8 & 2 & 4 & 9 & 1 & 6 & 8 & 7 & 6 & 9 & 0 & 4 \\
1 & 8 & 0 & 7 & 0 & 7 & 10 & 2 & 1 & 9 & 9 & 3 & 3 & 2 & 8 & 0 \\
2 & 0 & 8 & 0 & 10 & 10 & 8 & 10 & 3 & 0 & 10 & 8 & 0 & 8 & 10 & 0 \\
8 & 8 & 2 & 7 & 10 & 7 & 10 & 2 & 9 & 7 & 7 & 0 & 6 & 1 & 0 & 3 \\
5 & 4 & 4 & 10 & 8 & 10 & 10 & 7 & 8 & 5 & 2 & 5 & 4 & 8 & 3 & 9 \\
9 & 8 & 9 & 2 & 10 & 2 & 7 & 0 & 3 & 2 & 8 & 10 & 7 & 8 & 4 & 6 \\
9 & 6 & 1 & 1 & 3 & 9 & 8 & 3 & 0 & 1 & 5 & 10 & 7 & 7 & 8 & 10 \\
8 & 10 & 6 & 9 & 0 & 7 & 5 & 2 & 1 & 9 & 9 & 1 & 1 & 9 & 4 & 4 \\
1 & 2 & 8 & 9 & 10 & 7 & 2 & 8 & 5 & 9 & 3 & 7 & 7 & 2 & 9 & 4 \\
6 & 6 & 7 & 3 & 8 & 0 & 5 & 10 & 10 & 1 & 7 & 7 & 2 & 6 & 8 & 2 \\
4 & 9 & 6 & 3 & 0 & 6 & 4 & 7 & 7 & 1 & 7 & 2 & 5 & 7 & 7 & 6 \\
7 & 7 & 9 & 2 & 8 & 1 & 8 & 8 & 7 & 9 & 2 & 6 & 7 & 4 & 1 & 5 \\
10 & 8 & 0 & 8 & 10 & 0 & 3 & 4 & 8 & 4 & 9 & 8 & 7 & 1 & 7 & 2 \\
6 & 10 & 4 & 0 & 0 & 3 & 9 & 6 & 10 & 4 & 4 & 2 & 6 & 5 & 2 & 9
\end{smatrix}$

\item $[36,18,13]_{11}$ code with a generator matrix $(I_{18} \mid A_{11}^{36})$ where \\
$A_{11}^{36}=\begin{smatrix}
5 & 10 & 6 & 7 & 5 & 4 & 7 & 1 & 9 & 4 & 4 & 7 & 7 & 8 & 1 & 8 & 8 & 8 \\
10 & 10 & 8 & 7 & 6 & 4 & 5 & 8 & 9 & 7 & 10 & 1 & 3 & 0 & 5 & 8 & 9 & 9 \\
6 & 8 & 10 & 6 & 4 & 2 & 4 & 6 & 10 & 0 & 1 & 5 & 6 & 9 & 1 & 9 & 5 & 1 \\
7 & 7 & 6 & 3 & 4 & 3 & 4 & 2 & 1 & 10 & 4 & 1 & 5 & 3 & 7 & 8 & 4 & 6 \\
5 & 6 & 4 & 4 & 6 & 3 & 7 & 4 & 8 & 9 & 9 & 9 & 9 & 10 & 0 & 4 & 5 & 9 \\
4 & 4 & 2 & 3 & 3 & 1 & 7 & 2 & 2 & 0 & 3 & 7 & 6 & 6 & 5 & 1 & 1 & 4 \\
7 & 5 & 4 & 4 & 7 & 7 & 6 & 4 & 10 & 7 & 1 & 2 & 9 & 1 & 4 & 0 & 6 & 7 \\
1 & 8 & 6 & 2 & 4 & 2 & 4 & 3 & 7 & 8 & 4 & 1 & 1 & 1 & 2 & 2 & 1 & 4 \\
9 & 9 & 10 & 1 & 8 & 2 & 10 & 7 & 3 & 8 & 7 & 6 & 9 & 9 & 3 & 3 & 1 & 7 \\
4 & 7 & 0 & 10 & 9 & 0 & 7 & 8 & 8 & 0 & 5 & 0 & 0 & 8 & 0 & 4 & 8 & 10 \\
4 & 10 & 1 & 4 & 9 & 3 & 1 & 4 & 7 & 5 & 2 & 10 & 3 & 3 & 2 & 0 & 1 & 3 \\
7 & 1 & 5 & 1 & 9 & 7 & 2 & 1 & 6 & 0 & 10 & 0 & 0 & 8 & 6 & 5 & 0 & 0 \\
7 & 3 & 6 & 5 & 9 & 6 & 9 & 1 & 9 & 0 & 3 & 0 & 9 & 7 & 6 & 7 & 5 & 0 \\
8 & 0 & 9 & 3 & 10 & 6 & 1 & 1 & 9 & 8 & 3 & 8 & 7 & 8 & 8 & 1 & 5 & 10 \\
1 & 5 & 1 & 7 & 0 & 5 & 4 & 2 & 3 & 0 & 2 & 6 & 6 & 8 & 10 & 0 & 8 & 7 \\
8 & 8 & 9 & 8 & 4 & 1 & 0 & 2 & 3 & 4 & 0 & 5 & 7 & 1 & 0 & 2 & 9 & 2 \\
8 & 9 & 5 & 4 & 5 & 1 & 6 & 1 & 1 & 8 & 1 & 0 & 5 & 5 & 8 & 9 & 4 & 10 \\
8 & 9 & 1 & 6 & 9 & 4 & 7 & 4 & 7 & 10 & 3 & 0 & 0 & 10 & 7 & 2 & 10 & 6
\end{smatrix}$

\item $[40,20,14]_{11}$ code with a generator matrix $(I_{20} \mid A_{11}^{40})$ where \\
$A_{11}^{40}=\begin{smatrix}
5 & 4 & 6 & 1 & 7 & 10 & 5 & 5 & 8 & 8 & 10 & 4 & 9 & 5 & 9 & 5 & 8 & 3 & 9 & 6 \\
4 & 2 & 6 & 3 & 1 & 2 & 1 & 2 & 5 & 8 & 2 & 4 & 5 & 9 & 1 & 7 & 5 & 7 & 3 & 4 \\
6 & 6 & 2 & 9 & 4 & 4 & 9 & 5 & 1 & 8 & 6 & 2 & 6 & 9 & 10 & 5 & 6 & 0 & 5 & 0 \\
1 & 3 & 9 & 5 & 3 & 10 & 2 & 4 & 10 & 3 & 1 & 10 & 9 & 7 & 8 & 9 & 10 & 7 & 0 & 0 \\
7 & 1 & 4 & 3 & 7 & 3 & 8 & 0 & 1 & 7 & 7 & 6 & 0 & 5 & 7 & 10 & 9 & 6 & 5 & 0 \\
10 & 2 & 4 & 10 & 3 & 6 & 0 & 9 & 3 & 9 & 4 & 8 & 9 & 0 & 3 & 4 & 6 & 8 & 0 & 5 \\
5 & 1 & 9 & 2 & 8 & 0 & 4 & 6 & 1 & 2 & 4 & 8 & 6 & 3 & 8 & 5 & 5 & 4 & 3 & 3 \\
5 & 2 & 5 & 4 & 0 & 9 & 6 & 3 & 4 & 7 & 8 & 8 & 3 & 2 & 10 & 3 & 2 & 3 & 3 & 7 \\
8 & 5 & 1 & 10 & 1 & 3 & 1 & 4 & 0 & 7 & 9 & 3 & 2 & 9 & 9 & 2 & 9 & 9 & 1 & 6 \\
8 & 8 & 8 & 3 & 7 & 9 & 2 & 7 & 7 & 5 & 3 & 9 & 3 & 0 & 5 & 8 & 5 & 6 & 8 & 8 \\
10 & 2 & 6 & 1 & 7 & 4 & 4 & 8 & 9 & 3 & 6 & 6 & 1 & 5 & 7 & 10 & 2 & 3 & 8 & 6 \\
4 & 4 & 2 & 10 & 6 & 8 & 8 & 8 & 3 & 9 & 6 & 6 & 10 & 5 & 2 & 6 & 7 & 6 & 6 & 1 \\
9 & 5 & 6 & 9 & 0 & 9 & 6 & 3 & 2 & 3 & 1 & 10 & 4 & 9 & 4 & 7 & 3 & 2 & 8 & 10 \\
5 & 9 & 9 & 7 & 5 & 0 & 3 & 2 & 9 & 0 & 5 & 5 & 9 & 2 & 6 & 8 & 2 & 10 & 8 & 0 \\
9 & 1 & 10 & 8 & 7 & 3 & 8 & 10 & 9 & 5 & 7 & 2 & 4 & 6 & 1 & 3 & 6 & 1 & 7 & 4 \\
5 & 7 & 5 & 9 & 10 & 4 & 5 & 3 & 2 & 8 & 10 & 6 & 7 & 8 & 3 & 4 & 9 & 2 & 5 & 3 \\
8 & 5 & 6 & 10 & 9 & 6 & 5 & 2 & 9 & 5 & 2 & 7 & 3 & 2 & 6 & 9 & 4 & 9 & 3 & 6 \\
3 & 7 & 0 & 7 & 6 & 8 & 4 & 3 & 9 & 6 & 3 & 6 & 2 & 10 & 1 & 2 & 9 & 6 & 7 & 1 \\
9 & 3 & 5 & 0 & 5 & 0 & 3 & 3 & 1 & 8 & 8 & 6 & 8 & 8 & 7 & 5 & 3 & 7 & 3 & 10 \\
6 & 4 & 0 & 0 & 0 & 5 & 3 & 7 & 6 & 8 & 6 & 1 & 10 & 0 & 4 & 3 & 6 & 1 & 10 & 2
\end{smatrix}$
\end{itemize}

\begin{table}[h]
\begin{center}
\begin{small}
	\begin{tabular}{ccc|ccc}
		\hline
		$[n,k,d]_p$ & $d_{sym.}$ &$d_{sd.}$&$[n,k,d]_p$ & $d_{sym.}$ &$d_{sd.}$  \\
		\hline
		$[4,2,3]_{11}$ & $3$ &$3$&$[24,12,9]_{11}$ & $9$ &$9-12$  \\
		$[8,4,5]_{11}$ & $5$ &$5$&$[28,14,10]_{11}$ & $10$ &$10-14$  \\
		$[12,6,7]_{11}$ & $7$ &$7$&$\mathbf{[32,16,12]_{11}}$ & $\mathbf{12}$&$?-16$  \\
		$[16,7,8]_{11}$ & $8$ &$9$&$\mathbf{[36,18,13]_{11}}$ & $\mathbf{13}$&$12-18$ \\			
		$[20,10,8]_{11}$ & $8$ &$10$&$\mathbf{[40,20,14]_{11}}$ & $\mathbf{14}$ &$13-20$  \\
		\hline
	\end{tabular}
\end{small}
\caption{Best known minimum distances of symmetric self-dual codes over $GF(11)$ }
\label{previous_results_11}
\end{center}
\end{table}

\subsection{Symmetric self-dual codes over $GF(19)$. }
\begin{proposition}
There exist best symmetric self-dual codes over $GF(19)$ of length \linebreak $n=4,8,12,16,20,24,28,36,40$. Among them, $[4,2,3]_{19}$, $[8,4,5]_{19}$, $[12,6,7]_{19}$, and $[20,10,11]_{19}$ codes are MDS. Moreover, $[36,18,16]_{19}$ and $[40, 20,15]_{19}$ codes are new.
\end{proposition}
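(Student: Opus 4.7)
The plan is to prove the proposition constructively, by exhibiting for each listed length $n$ an explicit symmetric self-dual code over $GF(19)$ attaining the stated minimum distance. The natural strategy is induction on $n$ in steps of four, driven by the build-up constructions of Theorems \ref{SymBuildingup1} and \ref{SymBuildingup2}, and seeded by short base codes that are handled directly.

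For the seed lengths $n = 4, 8, 12$, I would exhibit standard symmetric generator matrices of codes with minimum distance $3, 5, 7$ respectively, each of which meets the Singleton bound $d = n/2 + 1$ and is therefore automatically MDS. For $n = 20$, I would supply a $[20,10,11]_{19}$ symmetric code directly, for instance by symmetrizing a suitable quadratic double circulant code via Corollary \ref{cor2.6} (cf.\ Table \ref{qdcodes}). Intermediate lengths $n = 16, 24, 28$ are reached by feeding a successful base code of length $n-4$ into Construction 1 or Construction 2. For Theorem \ref{SymBuildingup1}, the inputs are a pair $(\alpha,\beta) \in S_{-1}$ and a codeword $(\x_n,\y_n)$ of the base code with $\x_n \cdot \y_n = 0$ and $\x_n \cdot \x_n = k$ such that both $-1+k$ and $-1-k$ are squares in $GF(19)$; for Theorem \ref{SymBuildingup2}, the inputs are $P \in S_{-I_2}$ and a vector $\x \in GF(19)^{n}$ producing an $H$ that solves Eq.~(\ref{HP=PH}) with $H - P$ invertible. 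Since $19 \equiv 3 \pmod 4$, Proposition \ref{S_-1} and Corollary \ref{S_-I_2} give $|S_{-1}| = |S_{-I_2}| = 20$, so the parameter space is finite and easily enumerated. Each admissible choice automatically yields a symmetric self-dual code, so only the minimum distance needs to be checked by computer.

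The substantive cases are $n = 36$ and $n = 40$, where the proposition asserts that symmetric distances $14$ and $15$ are attainable, exceeding (or filling in a gap in) the previously best-known self-dual bounds summarized in Table \ref{our_results1}: at $n = 36$ no $d_{sd}$ was previously recorded for $GF(19)$, while at $n = 40$ the best prior bound was $13$. For these, I would collect as many inequivalent successful codes of length $n - 4$ as possible from the previous inductive step and run both Construction 1 and Construction 2 on each, sweeping over all admissible parameter tuples in Magma, and retain any output whose minimum distance meets the target. Once such codes are found, novelty follows immediately by comparison with Table \ref{our_results1}, and MDS status in the earlier cases follows from the Singleton bound applied to the recorded parameters.

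The main obstacle is that neither construction guarantees any particular minimum distance: success at $n = 36, 40$ hinges on the joint richness of the $(\alpha,\beta)$- or $P$-parameter space and the supply of seed codes at length $n-4$. To mitigate this, I would iterate the two constructions over multiple inequivalent bases and, when necessary, interleave Construction 1 and Construction 2 along the inductive chain from $n = 28$ upward so that the reachable set at each length is as large as possible. Once explicit generator matrices are located, the verifications of symmetry, self-duality, and minimum distance reduce to routine Magma computations analogous to those displayed earlier in this section for $GF(11)$.
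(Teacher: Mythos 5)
Your proposal matches the paper's approach: the authors prove this proposition purely constructively, by generating symmetric self-dual codes over $GF(19)$ with Constructions 1 and 2 (Theorems \ref{SymBuildingup1} and \ref{SymBuildingup2}) seeded from shorter codes, exhibiting explicit symmetric generator matrices for the new $[36,18,14]_{19}$ and $[40,20,15]_{19}$ codes, and checking self-duality and minimum distance in \textsc{Magma}, with MDS status for $n=4,8,12,20$ read off from the Singleton bound. (Note the statement's ``$[36,18,16]_{19}$'' is a typo in the paper for $[36,18,14]_{19}$, as the displayed matrix and Table \ref{previous_results_19} confirm.)
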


We give the highest minimum distance $d_{sym}$ of symmetric self-dual codes and the previously best known minimum distance $d_{sd}$ of self-dual codes in Table \ref{previous_results_19}. In this table, new parameters are written in bold. We present two symmetric self-dual codes having new parameters:
\begin{itemize}

\item $[36,18,14]_{19}$ code with a generator matrix $(I_{18} \mid A_{19}^{36})$ where \\
$A_{19}^{36}=\begin{smatrix}
16 & 14 & 4 & 15 & 10 & 15 & 17 & 7 & 4 & 16 & 16 & 14 & 3 & 7 & 5 & 2 & 5 & 12 \\
14 & 13 & 18 & 11 & 15 & 17 & 11 & 5 & 5 & 11 & 16 & 16 & 12 & 4 & 17 & 0 & 16 & 4 \\
4 & 18 & 18 & 12 & 18 & 12 & 2 & 6 & 12 & 18 & 14 & 1 & 10 & 16 & 10 & 6 & 13 & 6 \\
15 & 11 & 12 & 7 & 1 & 8 & 1 & 3 & 1 & 12 & 11 & 5 & 5 & 7 & 7 & 2 & 10 & 8 \\
10 & 15 & 18 & 1 & 7 & 9 & 14 & 14 & 7 & 12 & 13 & 16 & 16 & 2 & 16 & 9 & 16 & 4 \\
15 & 17 & 12 & 8 & 9 & 2 & 7 & 15 & 5 & 12 & 2 & 9 & 2 & 10 & 14 & 18 & 12 & 9 \\
17 & 11 & 2 & 1 & 14 & 7 & 11 & 13 & 16 & 1 & 16 & 17 & 4 & 11 & 4 & 11 & 9 & 18 \\
7 & 5 & 6 & 3 & 14 & 15 & 13 & 9 & 0 & 16 & 3 & 3 & 8 & 7 & 10 & 14 & 4 & 7 \\
4 & 5 & 12 & 1 & 7 & 5 & 16 & 0 & 12 & 17 & 1 & 7 & 4 & 0 & 9 & 0 & 17 & 18 \\
16 & 11 & 18 & 12 & 12 & 12 & 1 & 16 & 17 & 3 & 1 & 17 & 12 & 12 & 16 & 12 & 9 & 11 \\
16 & 16 & 14 & 11 & 13 & 2 & 16 & 3 & 1 & 1 & 15 & 17 & 4 & 12 & 10 & 0 & 7 & 4 \\
14 & 16 & 1 & 5 & 16 & 9 & 17 & 3 & 7 & 17 & 17 & 8 & 7 & 6 & 18 & 1 & 11 & 18 \\
3 & 12 & 10 & 5 & 16 & 2 & 4 & 8 & 4 & 12 & 4 & 7 & 14 & 0 & 9 & 9 & 6 & 15 \\
7 & 4 & 16 & 7 & 2 & 10 & 11 & 7 & 0 & 12 & 12 & 6 & 0 & 18 & 5 & 13 & 13 & 4 \\
5 & 17 & 10 & 7 & 16 & 14 & 4 & 10 & 9 & 16 & 10 & 18 & 9 & 5 & 13 & 7 & 0 & 7 \\
2 & 0 & 6 & 2 & 9 & 18 & 11 & 14 & 0 & 12 & 0 & 1 & 9 & 13 & 7 & 12 & 17 & 3 \\
5 & 16 & 13 & 10 & 16 & 12 & 9 & 4 & 17 & 9 & 7 & 11 & 6 & 13 & 0 & 17 & 2 & 17 \\
12 & 4 & 6 & 8 & 4 & 9 & 18 & 7 & 18 & 11 & 4 & 18 & 15 & 4 & 7 & 3 & 17 & 10
\end{smatrix}$

\item $[40,20,15]_{19}$ code with a generator matrix $(I_{20} \mid A_{19}^{40})$ where \\
$A_{19}^{40}=\begin{smatrix}
8 & 7 & 12 & 5 & 0 & 13 & 15 & 11 & 16 & 6 & 17 & 14 & 6 & 6 & 6 & 4 & 18 & 14 & 13 & 14 \\
7 & 16 & 13 & 4 & 13 & 3 & 1 & 13 & 4 & 11 & 5 & 12 & 6 & 4 & 13 & 16 & 11 & 6 & 6 & 16 \\
12 & 13 & 5 & 6 & 13 & 11 & 12 & 12 & 16 & 9 & 3 & 0 & 16 & 17 & 2 & 8 & 6 & 14 & 6 & 9 \\
5 & 4 & 6 & 1 & 7 & 15 & 0 & 4 & 16 & 14 & 1 & 8 & 0 & 9 & 12 & 9 & 10 & 5 & 16 & 2 \\
0 & 13 & 13 & 7 & 17 & 17 & 18 & 17 & 16 & 16 & 8 & 0 & 1 & 13 & 14 & 3 & 11 & 6 & 9 & 14 \\
13 & 3 & 11 & 15 & 17 & 9 & 0 & 0 & 11 & 2 & 11 & 1 & 11 & 13 & 15 & 16 & 16 & 15 & 0 & 0 \\
15 & 1 & 12 & 0 & 18 & 0 & 0 & 14 & 4 & 15 & 13 & 8 & 14 & 17 & 17 & 9 & 0 & 5 & 15 & 0 \\
11 & 13 & 12 & 4 & 17 & 0 & 14 & 6 & 12 & 5 & 18 & 18 & 12 & 8 & 3 & 13 & 15 & 10 & 11 & 18 \\
16 & 4 & 16 & 16 & 16 & 11 & 4 & 12 & 7 & 2 & 10 & 3 & 4 & 16 & 1 & 13 & 16 & 8 & 12 & 17 \\
6 & 11 & 9 & 14 & 16 & 2 & 15 & 5 & 2 & 12 & 12 & 12 & 8 & 5 & 14 & 3 & 5 & 1 & 17 & 9 \\
17 & 5 & 3 & 1 & 8 & 11 & 13 & 18 & 10 & 12 & 18 & 0 & 16 & 8 & 11 & 18 & 5 & 17 & 10 & 10 \\
14 & 12 & 0 & 8 & 0 & 1 & 8 & 18 & 3 & 12 & 0 & 16 & 1 & 11 & 14 & 10 & 14 & 7 & 7 & 17 \\
6 & 6 & 16 & 0 & 1 & 11 & 14 & 12 & 4 & 8 & 16 & 1 & 6 & 10 & 7 & 13 & 9 & 6 & 4 & 0 \\
6 & 4 & 17 & 9 & 13 & 13 & 17 & 8 & 16 & 5 & 8 & 11 & 10 & 5 & 5 & 9 & 11 & 7 & 13 & 4 \\
6 & 13 & 2 & 12 & 14 & 15 & 17 & 3 & 1 & 14 & 11 & 14 & 7 & 5 & 6 & 13 & 10 & 15 & 14 & 8 \\
4 & 16 & 8 & 9 & 3 & 16 & 9 & 13 & 13 & 3 & 18 & 10 & 13 & 9 & 13 & 13 & 15 & 17 & 0 & 17 \\
18 & 11 & 6 & 10 & 11 & 16 & 0 & 15 & 16 & 5 & 5 & 14 & 9 & 11 & 10 & 15 & 8 & 6 & 18 & 17 \\
14 & 6 & 14 & 5 & 6 & 15 & 5 & 10 & 8 & 1 & 17 & 7 & 6 & 7 & 15 & 17 & 6 & 7 & 11 & 17 \\
13 & 6 & 6 & 16 & 9 & 0 & 15 & 11 & 12 & 17 & 10 & 7 & 4 & 13 & 14 & 0 & 18 & 11 & 13 & 5 \\
14 & 16 & 9 & 2 & 14 & 0 & 0 & 18 & 17 & 9 & 10 & 17 & 0 & 4 & 8 & 17 & 17 & 17 & 5 & 17
\end{smatrix}$
\end{itemize}

\begin{table}[h]
\begin{center}
\begin{small}
	\begin{tabular}{ccc|ccc}
		\hline
		$[n,k,d]_p$ & $d_{sym}$ &$d_{sd}$&$[n,k,d]_p$ & $d_{sym}$ &$d_{sd}$  \\
		\hline
		$[4,2,3]_{19}$ & $3$ &$3$&$[24,12,10]_{19}$ & $10$ &$10-12$  \\
		$[8,4,5]_{19}$ & $5$ &$5$&$[28,14,11]_{19}$ & $11$ &$11-14$  \\
		$[12,6,7]_{19}$ & $7$ &$7$&$[32,16,12]_{19}$ & $12$&$14-16$  \\
		$[16,8,8]_{19}$ & $8$ &$8-9$&$\mathbf{[36,18,14]_{19}}$ & $\mathbf{14}$&$?-18$ \\			
		$[20,10,11]_{19}$ & $11$ &$11$&$\mathbf{[40,20,15]_{19}}$ & $\mathbf{15}$ &$?-20$  \\
		\hline
	\end{tabular}
\end{small}
\caption{Best known minimum distances of self-dual codes over $GF(19)$ }
\label{previous_results_19}
\end{center}
\end{table}

\subsection{Symmetric self-dual codes over $GF(23)$. }

\begin{proposition}
There exist the best symmetric self-dual codes over $GF(23)$ of length $n=4,8,12, 28,32,36,40$. Among them, $[4,2,3]_{23}$, $[8,4,5]_{23}$ and $[12,6,7]_{23}$ codes are MDS. Moreover, $[32,16,12]_{23}$, $[36,18,16]_{19}$ and $[40, 20,15]_{19}$ codes are new.
\end{proposition}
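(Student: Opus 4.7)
The plan is to build each claimed code explicitly using the constructions of Section 3 and then verify its parameters by direct computation. First, I would handle the short lengths $n=4,8,12$: for these I would exhibit (or invoke from the literature) a single symmetric self-dual code in each length and check that the minimum distance equals $n/2+1$ in each case. Since any $[2k,k]$ code with $d=k+1$ meets the Singleton bound, and since $2k\le q+1=24$ for $k\le 6$, this simultaneously establishes existence and the MDS property, so the second sentence of the proposition follows.

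For the longer lengths $n=28,32,36,40$ I would iterate Theorems~\ref{SymBuildingup1} and~\ref{SymBuildingup2}, each of which lifts a symmetric self-dual code of length $2m$ to one of length $2m+4$. Starting from the length-$12$ MDS code constructed in the previous step, four applications (in some combination of Construction~1 and Construction~2) produce candidate codes of lengths $16,20,24,28$, and three further applications continue through $32,36,40$. At each stage one must choose free parameters obeying the hypotheses of the relevant theorem: for Construction~1, a codeword $(\x_n,\y_n)$ with $\x_n\cdot\y_n=0$, $\x_n\cdot\x_n=k$, and $-1\pm k$ both squares in $GF(23)$, together with $(\alpha,\beta)\in S_{-1}$; for Construction~2, a vector $\x\in GF(23)^n$ and $P\in S_{-I_2}$ such that a symmetric solution $H$ of $(H+P)(H-P)=-MM^T$ exists with $H-P$ nonsingular. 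By Proposition~\ref{S_-1} and Corollary~\ref{S_-I_2}, these parameter sets are nonempty (indeed of size $q+1=24$) whenever the remaining conditions are met.

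To obtain the three new parameter triples $[32,16,12]_{23}$, $[36,18,14]_{23}$, $[40,20,15]_{23}$, I would carry out a Magma search over the free parameters at each iteration, retaining at every step only those lifted codes whose minimum distance attains the target value. I would then present an explicit generator matrix $(I_k\mid A)$ for each new code, just as in the two previous subsections for $GF(11)$ and $GF(19)$, and verify directly that $A=A^T$, $AA^T=-I_k$, and $d(\mathcal{C})$ equals the claimed minimum distance. The remaining best symmetric codes at lengths $28$ are obtained identically, but stopping when the best-known bound $d_{sd}$ from Table~\ref{our_results1} is reached.

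The main obstacle is not structural but combinatorial: the two constructions preserve symmetry and self-duality automatically, yet they give no \emph{a priori} control over the minimum distance of the lifted code, so each iteration requires a search through $O(q^{n+2})$ parameter choices. For $n$ close to $40$ this space is too large to enumerate exhaustively, so the search must be pruned — for instance by fixing $(\alpha,\beta)$ up to the action of $\M^{2n}$ and restricting $\x$ to low-weight vectors — and by choosing, at each intermediate length, a seed code whose weight enumerator leaves room for the target distance at the next stage. Controlling this search so that it reaches the bounds listed in Table~\ref{previous_results_19} (with the $GF(23)$ analogue) is the key computational effort underlying the proposition.
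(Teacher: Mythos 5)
Your proposal matches the paper's approach: the proposition is established there exactly as you describe, by iterating the two lifting constructions with a Magma search over the free parameters and then exhibiting explicit symmetric generator matrices $(I_k \mid A)$ for the new $[32,16,12]_{23}$, $[36,18,14]_{23}$, and $[40,20,15]_{23}$ codes, whose self-duality, symmetry, and minimum distance are verified by direct computation. The short MDS cases and the remaining lengths are likewise handled by explicit construction and comparison with the known bounds, so no further comment is needed.
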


We give the highest minimum distance $d_{sym}$ of symmetric self-dual codes and the previously best known minimum distance $d_{sd}$ of self-dual codes in Table \ref{previous_results_23}. In this table, new parameters are written in bold. We present three symmetric self-dual codes having new parameters:

\begin{itemize}

\item $[32,16,12]_{23}$ code with a generator matrix $(I_{16} \mid A_{23}^{32})$ where \\
$A_{23}^{32}=\begin{smatrix}
20 & 4 & 11 & 18 & 21 & 7 & 19 & 7 & 15 & 6 & 18 & 18 & 2 & 10 & 5 & 12 \\
4 & 1 & 19 & 12 & 11 & 19 & 20 & 8 & 10 & 3 & 11 & 0 & 3 & 6 & 18 & 18 \\
11 & 19 & 12 & 20 & 9 & 2 & 0 & 22 & 21 & 21 & 9 & 6 & 21 & 16 & 13 & 9 \\
18 & 12 & 20 & 2 & 13 & 7 & 4 & 7 & 22 & 18 & 5 & 15 & 0 & 5 & 11 & 20 \\
21 & 11 & 9 & 13 & 9 & 20 & 8 & 19 & 11 & 12 & 11 & 21 & 19 & 14 & 19 & 20 \\
7 & 19 & 2 & 7 & 20 & 3 & 12 & 19 & 5 & 2 & 22 & 1 & 21 & 21 & 22 & 13 \\
19 & 20 & 0 & 4 & 8 & 12 & 12 & 4 & 19 & 7 & 17 & 11 & 8 & 4 & 1 & 0 \\
7 & 8 & 22 & 7 & 19 & 19 & 4 & 1 & 0 & 0 & 16 & 16 & 8 & 15 & 9 & 3 \\
15 & 10 & 21 & 22 & 11 & 5 & 19 & 0 & 22 & 1 & 2 & 12 & 13 & 12 & 10 & 5 \\
6 & 3 & 21 & 18 & 12 & 2 & 7 & 0 & 1 & 14 & 5 & 5 & 22 & 18 & 11 & 1 \\
18 & 11 & 9 & 5 & 11 & 22 & 17 & 16 & 2 & 5 & 10 & 20 & 18 & 12 & 6 & 18 \\
18 & 0 & 6 & 15 & 21 & 1 & 11 & 16 & 12 & 5 & 20 & 13 & 16 & 17 & 5 & 22 \\
2 & 3 & 21 & 0 & 19 & 21 & 8 & 8 & 13 & 22 & 18 & 16 & 3 & 5 & 7 & 6 \\
10 & 6 & 16 & 5 & 14 & 21 & 4 & 15 & 12 & 18 & 12 & 17 & 5 & 7 & 18 & 2 \\
5 & 18 & 13 & 11 & 19 & 22 & 1 & 9 & 10 & 11 & 6 & 5 & 7 & 18 & 3 & 0 \\
12 & 18 & 9 & 20 & 20 & 13 & 0 & 3 & 5 & 1 & 18 & 22 & 6 & 2 & 0 & 6
\end{smatrix}$

\item $[36,18,14]_{23}$ code with a generator matrix $(I_{18} \mid A_{23}^{36})$ where \\
$A_{23}^{36}=\begin{smatrix}
14 & 8 & 18 & 22 & 3 & 17 & 6 & 3 & 2 & 7 & 14 & 2 & 22 & 12 & 6 & 8 & 12 & 19 \\
8 & 18 & 5 & 14 & 21 & 10 & 12 & 16 & 15 & 0 & 18 & 16 & 0 & 20 & 3 & 1 & 2 & 6 \\
18 & 5 & 3 & 13 & 8 & 0 & 20 & 1 & 13 & 12 & 22 & 19 & 14 & 9 & 14 & 15 & 22 & 18 \\
22 & 14 & 13 & 9 & 7 & 1 & 1 & 17 & 10 & 9 & 22 & 14 & 1 & 11 & 11 & 15 & 19 & 12 \\
3 & 21 & 8 & 7 & 20 & 9 & 0 & 2 & 10 & 19 & 7 & 5 & 16 & 0 & 16 & 3 & 15 & 19 \\
17 & 10 & 0 & 1 & 9 & 22 & 1 & 13 & 4 & 13 & 5 & 10 & 14 & 0 & 14 & 17 & 2 & 8 \\
6 & 12 & 20 & 1 & 0 & 1 & 3 & 13 & 20 & 13 & 1 & 19 & 2 & 6 & 12 & 12 & 2 & 0 \\
3 & 16 & 1 & 17 & 2 & 13 & 13 & 21 & 13 & 16 & 17 & 4 & 6 & 6 & 5 & 12 & 9 & 1 \\
2 & 15 & 13 & 10 & 10 & 4 & 20 & 13 & 22 & 2 & 6 & 5 & 14 & 14 & 13 & 16 & 13 & 8 \\
7 & 0 & 12 & 9 & 19 & 13 & 13 & 16 & 2 & 17 & 9 & 21 & 17 & 4 & 2 & 7 & 20 & 7 \\
14 & 18 & 22 & 22 & 7 & 5 & 1 & 17 & 6 & 9 & 20 & 15 & 4 & 22 & 13 & 0 & 17 & 14 \\
2 & 16 & 19 & 14 & 5 & 10 & 19 & 4 & 5 & 21 & 15 & 13 & 2 & 15 & 20 & 20 & 3 & 18 \\
22 & 0 & 14 & 1 & 16 & 14 & 2 & 6 & 14 & 17 & 4 & 2 & 0 & 2 & 9 & 0 & 11 & 1 \\
12 & 20 & 9 & 11 & 0 & 0 & 6 & 6 & 14 & 4 & 22 & 15 & 2 & 1 & 7 & 7 & 5 & 15 \\
6 & 3 & 14 & 11 & 16 & 14 & 12 & 5 & 13 & 2 & 13 & 20 & 9 & 7 & 13 & 1 & 9 & 20 \\
8 & 1 & 15 & 15 & 3 & 17 & 12 & 12 & 16 & 7 & 0 & 20 & 0 & 7 & 1 & 1 & 4 & 9 \\
12 & 2 & 22 & 19 & 15 & 2 & 2 & 9 & 13 & 20 & 17 & 3 & 11 & 5 & 9 & 4 & 18 & 15 \\
19 & 6 & 18 & 12 & 19 & 8 & 0 & 1 & 8 & 7 & 14 & 18 & 1 & 15 & 20 & 9 & 15 & 15
\end{smatrix}$

\item $[40,20,15]_{23}$ code with a generator matrix $(I_{20} \mid A_{23}^{40})$ where \\
$A_{23}^{40}=\begin{smatrix}
3 & 3 & 17 & 18 & 20 & 7 & 20 & 7 & 8 & 12 & 14 & 0 & 8 & 22 & 18 & 0 & 0 & 8 & 9 & 19 \\
3 & 8 & 11 & 22 & 1 & 4 & 11 & 4 & 4 & 9 & 8 & 20 & 7 & 21 & 19 & 16 & 13 & 9 & 22 & 10 \\
17 & 11 & 8 & 10 & 0 & 11 & 3 & 6 & 20 & 11 & 3 & 20 & 15 & 1 & 14 & 4 & 11 & 8 & 19 & 6 \\
18 & 22 & 10 & 1 & 0 & 6 & 14 & 2 & 1 & 15 & 22 & 19 & 11 & 1 & 2 & 3 & 5 & 12 & 12 & 16 \\
20 & 1 & 0 & 0 & 12 & 5 & 0 & 15 & 0 & 5 & 20 & 1 & 13 & 16 & 21 & 14 & 8 & 21 & 10 & 3 \\
7 & 4 & 11 & 6 & 5 & 15 & 2 & 7 & 9 & 0 & 22 & 15 & 1 & 16 & 22 & 2 & 8 & 13 & 16 & 7 \\
20 & 11 & 3 & 14 & 0 & 2 & 17 & 4 & 17 & 6 & 0 & 6 & 14 & 5 & 19 & 8 & 11 & 11 & 17 & 5 \\
7 & 4 & 6 & 2 & 15 & 7 & 4 & 5 & 9 & 19 & 16 & 19 & 7 & 12 & 12 & 14 & 11 & 15 & 22 & 3 \\
8 & 4 & 20 & 1 & 0 & 9 & 17 & 9 & 12 & 10 & 13 & 18 & 5 & 6 & 19 & 10 & 21 & 5 & 5 & 10 \\
12 & 9 & 11 & 15 & 5 & 0 & 6 & 19 & 10 & 5 & 2 & 16 & 14 & 13 & 5 & 6 & 5 & 9 & 20 & 14 \\
14 & 8 & 3 & 22 & 20 & 22 & 0 & 16 & 13 & 2 & 0 & 14 & 10 & 9 & 10 & 19 & 7 & 12 & 7 & 21 \\
0 & 20 & 20 & 19 & 1 & 15 & 6 & 19 & 18 & 16 & 14 & 13 & 18 & 4 & 9 & 10 & 4 & 11 & 2 & 9 \\
8 & 7 & 15 & 11 & 13 & 1 & 14 & 7 & 5 & 14 & 10 & 18 & 15 & 19 & 6 & 20 & 20 & 12 & 15 & 22 \\
22 & 21 & 1 & 1 & 16 & 16 & 5 & 12 & 6 & 13 & 9 & 4 & 19 & 9 & 16 & 6 & 17 & 20 & 8 & 12 \\
18 & 19 & 14 & 2 & 21 & 22 & 19 & 12 & 19 & 5 & 10 & 9 & 6 & 16 & 17 & 14 & 13 & 15 & 3 & 13 \\
0 & 16 & 4 & 3 & 14 & 2 & 8 & 14 & 10 & 6 & 19 & 10 & 20 & 6 & 14 & 15 & 4 & 1 & 9 & 12 \\
0 & 13 & 11 & 5 & 8 & 8 & 11 & 11 & 21 & 5 & 7 & 4 & 20 & 17 & 13 & 4 & 12 & 22 & 18 & 19 \\
8 & 9 & 8 & 12 & 21 & 13 & 11 & 15 & 5 & 9 & 12 & 11 & 12 & 20 & 15 & 1 & 22 & 18 & 14 & 8 \\
9 & 22 & 19 & 12 & 10 & 16 & 17 & 22 & 5 & 20 & 7 & 2 & 15 & 8 & 3 & 9 & 18 & 14 & 20 & 5 \\
19 & 10 & 6 & 16 & 3 & 7 & 5 & 3 & 10 & 14 & 21 & 9 & 22 & 12 & 13 & 12 & 19 & 8 & 5 & 2
\end{smatrix}$
\end{itemize}

\begin{table}[h]
\begin{center}
\begin{small}
	\begin{tabular}{ccc|ccc}
		\hline
		$[n,k,d]_p$ & $d_{sym.}$ &$d_{s.d.}$&$[n,k,d]_p$ & $d_{sym.}$ &$d_{s.d.}$  \\
		\hline
		$[4,2,3]_{23}$ & $3$ &$3$&$[24,12,10]_{19}$ & $10$ &$13$  \\
		$[8,4,5]_{23}$ & $5$ &$5$&$[28,14,11]_{19}$ & $11$ &$11-14$  \\
		$[12,6,7]_{23}$ & $7$ &$7$&$\mathbf{[32,16,12]_{19}}$ & $\mathbf{12}$&$?-16$  \\
		$[16,8,8]_{23}$ & $8$ &$9$&$\mathbf{[36,18,14]_{19}}$ & $\mathbf{14}$&$13-18$ \\			
		$[20,10,9]_{23}$ & $9$ &$10-11$&$\mathbf{[40,20,15]_{19}}$ & $\mathbf{15}$ &$14-20$  \\
		\hline
	\end{tabular}
\end{small}
\caption{Best known minimum distances of symmetric self-dual codes over $GF(23)$ }
\label{previous_results_23}
\end{center}
\end{table}

\begin{table}[h!]
\begin{center}
\begin{small}
	\begin{tabular}{|c|r|r|r|r|r|r|r|}
		\hline
		$n\backslash q$ & \multicolumn{1}{|c}{5}    & \multicolumn{1}{|c}{7}      & \multicolumn{1}{|c}{11}             & \multicolumn{1}{|c}{13}       & \multicolumn{1}{|c}{17}               & \multicolumn{1}{|c}{19}       &  \multicolumn{1}{|c|}{23}      \\\hline
		\hline
		2  &$2^*$       &-          &-          &$2^*$       &$2^*$     & -         & -     \\\hline
		4  &$2^o$       &$3^*$		&$3^*$		&$3^*$       &$3^*$     &$3^*$      & $3^*$       \\\hline
		6  &$4^*$       &-          &-          &$4^*$       &$4^*$     & -         & -     \\\hline
		8  &$4^o$       &$5^*$		&$5^*$		&$5^*$       &$5^*$     &$5^*$      & $5^*$       \\\hline
		10 &$4^o$       &-          &-          &$6^*$       &$6^*$     & -         & -     \\\hline
		12 &$6^o$       &$6^o$		&$7^*$		&$6^o$       &$7^*$     &$7^*$      & $7^*$       \\\hline
		14 &$6^o$       &-          &-          &$8^*$       &$7-8$     & -         & -     \\\hline
		16 &$7^o$       & $7^o$	    &$8^o$		&$8^o$      &$8-9$  	& $8-9$     & $9^*$       \\\hline
		18 &$7^o$       &-          & -         &${8-9}$     &$10^*$ 	& -         & -     \\\hline
		20 &$8^o$       & $9^o$	&$10^o$		&$10^o$      &$10^o$ 	&$11^*$     & $10-11$       \\\hline	
		22 &$8^o$       &-          & -         &$10-11$     &$10-11$	& -         & -     \\\hline
		24 &$9-10$      & $9-11$	& $9-12$	&$10-12$   	 &$10-12$	& $10-12$   & $13^*$\\\hline
		26 &$9-10$      &-          & -         & ${10-13}$  &${10-13}$ & -         & -     \\\hline
		28 &$10-11$     & $11-13$	& $10-14$ 	& ${11-14}$  &${11-14}$ & $11-14$   & $11-14$         \\\hline
		30 &$10-12$     &-          & -         & ${11-15}$  &${12-15}$ & -         & -     \\\hline
		32 &$11-13$     & $13-14$	& $12-16$   & ${12-16}$  &${12-16}$ & ${14-16}$ & $12-16$             \\\hline
		34 &$11-14$     &-          &-          & ${12-17}$  &${13-17}$ &-          & -     \\\hline
		36 &$12-15$     & $13-16$	& $13-18$   & ${13-18}$  &${13-18}$ & $14-18$   & $14-18$       \\\hline
		38 &$12-16$     &-          &-          & ${13-19}$  &${14-19}$ &-          & -     \\\hline
		40 &$13-17$     & $14-18$	& $14-20$   & ${14-20}$  &${14-20}$ & $15-20$   & $15-20$       \\\hline
	\end{tabular}
\end{small}
\caption{Bounds on the highest minimum distances of self-dual codes over $GF(p)$ for primes $5 \le p \le 23$ up to lengths 40. In this table, ${}^o$ denotes optimal code and ${}^*$ denotes MDS code.}
\label{summary}
\end{center}
\end{table}

\section{Conclusions}
%
In this article, we introduced new construction methods of symmetric self-dual codes over finite fields. Then we have constructed many new symmetric self-dual codes, including 153 self-dual codes with new parameters, up to equivalence.
This paper contributes in two ways. One is to provide new \linebreak construction methods of symmetric self-dual codes over $GF(q)$ for the \linebreak challenging case of $q \equiv 3 \pmod 4$. The other is to improve bounds on the highest minimum distance of self-dual codes, which have not been significantly updated for almost two decades because of computational complexity.
We believe that our methods can produce more results for self-dual codes over larger finite fields and/or of longer lengths.

\section*{Acknowledgment}
The author sincerely thanks Dr. Markus Grassl for his helpful comments which was crucial for the implementation.
\appendix

\section{Proof of Theorem \ref{SymBuildingup1}}
\begin{proof}
It is obvious that the matrix $A_1, A, D$, and $E$ are symmetric. Thus, we have to show that

$$
A_1 A_1^T=\left(\begin{array}{c|c}
D & B \\
\hline
B^T& A+E\\
\end{array}\right)\left(\begin{array}{c|c}
D^T & B \\
\hline
B^T& A^T+E^T\\
\end{array}\right)= -I_{n+2}.$$
In other words, we have to show that following three identities are hold :

\begin{align}
D ^2 +B B^T&=-I_2, \label{eq:1}\\
D B +B(A+E)&= O_{2\times n},  \label{eq:2}\\
B^T B + (A+E)^2 &=-I_n.  \label{eq:3}
\end{align}

Firstly, we verify the equality of (\ref{eq:1}).  By the assumptions, we have that $A^2 = -I_n$, $\alpha^2+\beta^2=-1$ and $\x_n \x_n^T = k$. Since $(\x_n ,\y_n)$ is a codeword of a self-dual code $\CC$, it is also clear that $\x_n \x_n^T + \y_n \y_n^T =0$ and $(\x_n, \y_n ) G^T = \x_n +\y_n A = O_n$. Thus, $\y_n \y_n^T =-k$, $\y_n A= -\x_n$ and $\x_n A = \y_n$ . By direct computations, we obtain that

\begin{align*}
B AB^T&= \left( \begin{matrix}
	\alpha \x_nA + \beta \y_nA \\
	\beta \x_nA - \alpha \y_n A					
\end{matrix}\right) \left( \begin{matrix}
	\alpha \x_n + \beta \y_n \\
	\beta \x_n - \alpha \y_n 					
\end{matrix}\right)^T\\
&=\left( \begin{matrix}
	\alpha \y_n - \beta \x_n \\
	\beta \y_n + \alpha \x_n					
\end{matrix}\right) \left( \begin{matrix}
	\alpha \x_n^T + \beta \y_n^T &
	\beta \x_n^T - \alpha \y_n^T 					
\end{matrix}\right)\\
&=\left( \begin{matrix}
	-2k\alpha\beta & k(\alpha^2-\beta^2) \\
	k(\alpha^2-\beta^2) & 2k\alpha\beta
\end{matrix}\right),\\
\end{align*}
and
\begin{align*}
B EB^T&=\frac{1}{k}  \left( \begin{matrix}
	\alpha \x_n + \beta \y_n \\
	\beta \x_n - \alpha \y_n 					
\end{matrix}\right) ( s \x_n^T \x_n  + t \y_n^T \y_n -  \x_n^T \y_n - \y_n^T \x_n) \left( \begin{matrix}
	\alpha \x_n + \beta \y_n \\
	\beta \x_n - \alpha \y_n 					
\end{matrix}\right)^T\\
&=\frac{1}{k} \left( \begin{matrix}
	k\alpha s  \x_n -k\alpha \y_n -k\beta t \y_n +k \beta  \x_n  \\
	k \beta s  \x_n -k\beta \y_n + k \alpha t \y_n - k \alpha \x_n 					
\end{matrix}\right) \left( \begin{matrix}
	\alpha \x_n^T + \beta \y_n^T &
	\beta \x_n^T - \alpha \y_n^T 					
\end{matrix}\right)\\
&=\left( \begin{matrix}
	k\alpha^2 s + 2 k\alpha\beta + k\beta^2 t  & k\alpha \beta s +k\beta^2 - k \alpha^2 - k \alpha \beta t  \\
	k\alpha \beta s +k\beta a^2 - k \alpha^2 - k \alpha \beta t  &k\alpha^2 t  - 2 k\alpha\beta +  k\beta^2 s
\end{matrix}\right).\\
\end{align*}
Therefore,
$$B(A+E)B^T =BAB^T+BEB^T =\left(\begin{matrix}
k\alpha^2 s +  k\beta^2 t  & k\alpha \beta s - k \alpha \beta t  \\
k\alpha \beta s  - k \alpha \beta t  &k\alpha^2 t   +  k\beta^2 s
\end{matrix} \right),$$
and
\begin{align*}
D&= -\frac{1}{k^2} (B(A+E)B^T) B B^T \\&= -\frac{1}{k}  \left(\begin{matrix}
	\alpha^2 s +  \beta^2 t  & \alpha \beta s -  \alpha \beta t  \\
	\alpha \beta s  - \alpha \beta t  &\alpha^2 t   +  \beta^2 s
\end{matrix} \right)BB^T.
\end{align*}
Since
$B B^T= k \left( \begin{matrix}
\alpha^2 - \beta^2  & 2 \alpha \beta \\
2\alpha \beta & -\alpha^2 + \beta^2   					
\end{matrix}\right),$ we obtain
\begin{align*}
D&=  -\left(\begin{matrix}
	\alpha^2 s +  \beta^2 t  & \alpha \beta s -  \alpha \beta t  \\
	\alpha \beta s  - \alpha \beta t  &\alpha^2 t   +  \beta^2 s
\end{matrix} \right) \left( \begin{matrix}
	\alpha^2 - \beta^2  & 2 \alpha \beta \\
	2\alpha \beta & -\alpha^2 + \beta^2   					
\end{matrix}\right) \\ &=\left(\begin{matrix}
	\alpha^2 s -  \beta^2 t  & \alpha \beta (s +t) \\
	\alpha \beta (s +t)   &-\alpha^2 t   +  \beta^2 s
\end{matrix} \right).
\end{align*}
Hence,
\begin{align*}
D^2+ BB^T&=\left(\begin{matrix}
	\alpha^2 s -  \beta^2 t  &- \alpha \beta (s +t) \\
	- \alpha \beta (s +t)   &\alpha^2 t   -  \beta^2 s
\end{matrix} \right)^2 +  k \left( \begin{matrix}
	\alpha^2 - \beta^2  & 2 \alpha \beta \\
	2\alpha \beta & -\alpha^2 + \beta^2   					
\end{matrix}\right)\\
&=\left(\begin{matrix}
	-\alpha^2 s^2 -\beta^2 t^2  &-\alpha \beta ( s^2 - t^2) \\
	-\alpha \beta ( s^2 - t^2) & -\alpha^2 t^2 -\beta^2s^2
\end{matrix} \right) +  k \left( \begin{matrix}
	\alpha^2 - \beta^2  & 2 \alpha \beta \\
	2\alpha \beta & -\alpha^2 + \beta^2   					
\end{matrix}\right)\\
&=\left(\begin{matrix}
	\alpha^2 (k-s^2) -\beta^2(k+ t^2)  & \alpha \beta (2k  - s^2 + t^2) \\
	\alpha \beta (2k  - s^2 + t^2)& -\alpha^2(k+ t^2) +\beta^2(k-s^2)
\end{matrix} \right).
\end{align*}
Since $s^2 = -1+k$ and $t^2=-1-k$, we have that $k-s^2 = 1$, $k+t^2 = -1$ and $-s^2+t^2=2k$. Therefore,
\begin{align*}
D^2+ BB^T&=\left(\begin{matrix}
	\alpha^2  + \beta^2  & \alpha \beta (2k  - 2k) \\
	\alpha \beta (2k  - 2k)& \alpha^2 +\beta^2
\end{matrix} \right)\\&= -I_2,
\end{align*}
which is desired. The identities (\ref{eq:2}) and (\ref{eq:3}) are verified by similar computations.
\end{proof}

\section{Proof of Theorem \ref{SymBuildingup2}}
\begin{proof}	
It is easy to check that $A_2$ is symmetric.
Therefore, we have only to show that $A_2$ is anti-orthogonal, i.e.,
$$
\left(\begin{array}{c|c}
H & M \\
\hline
M^T& A+M^T (H-P)^{-1} M
\end{array}\right)\left(\begin{array}{c|c}
H & M \\
\hline
M^T& A+M^T (H-P)^{-1} M
\end{array}\right)=-I_{n+2}.$$
In other words, we have to show that following three identities are hold :

\begin{align}
H^2 +MM^T&=-I_2, \label{eq:4}\\
HM + M(A + M^T(H-P)^{-1}M)& = O_{2\times n},  \label{eq:5}\\
MM^T+(A + M^T(H-P)^{-1}M)^2 &=-I_n.  \label{eq:6}
\end{align}

We note that, with the assumption, $MA=PM$ and $HP=PH$ by Propositions \ref{GP1} and \ref{GP2}.

First, it is easy to show that the identity (\ref{eq:4}) is true from the equation (\ref{HP=PH}). For the identity (\ref{eq:5}), we calculate that
\begin{align*}
HM  +M(A+M^T(H-P)^{-1}M)&= HM + MA + MM^T(H-P)^{-1}M\\
&=(H + P)M + MM^T (H-P)^{-1}M\\ &=((H + P)(H-P) + MM^T)(H-P)^{-1}M\\&=O_2 (H-P)^{-1}M\\&=O_{2\times n}
\end{align*}
and the result follows.

Finally, for the identity (\ref{eq:6}), we expand the left hand side of (\ref{eq:6}):
\begin{align}
&M^TM + (A + M^T(H-P)^{-1}M)^2 \nonumber \\
\begin{split}
	&=M^TM+A^2 + AM^T(H-P)^{-1}M+ M^T(H-P)^{-1}MA \label{eq:7} \\&\quad\quad\quad\quad\quad\quad\quad\quad\quad+ M^T(H-P)^{-1}MM^T(H-P)^{-1}M.
\end{split}
\end{align}

Note that $A^2$, the second term of (\ref{eq:7}) equals $-I_n$. We compute the sum of (\ref{eq:7}) except the last term: 	\begin{align*}
& M^TM+A^2+AM^T(H-P)^{-1}M+ M^T(H-P)^{-1}MA\\
&=  M^TM-I_n+M^TP(H-P)^{-1}M+ M^T(H-P)^{-1}PM\\
&=  -I_n+M^T(I_n+P(H-P)^{-1}+ (H-P)^{-1}P)M\\
&= -I_n  + M^T(H-P)^{-1}((H-P)^2+(H-P)P+ P(H-P))(H-P)^{-1}M\\
&=  -I_n+M^T(H-P)^{-1}(H^2-P^2)(H-P)^{-1}M.
\end{align*}
And we put $MM^T=-(H+P)(H-P)$ in the last term of (\ref{eq:7}) to calculate
\begin{align*}
& M^T(H-P)^{-1}MM^T(H-P)^{-1}M\\
&=-M^T(H-P)^{-1}(H+P)(H-P)(H-P)^{-1}M\\
&=-M^T(H-P)^{-1}(H^2 -P^2)(H-P)^{-1}M.
\end{align*}
Therefore, we obtain that
\begin{align*}
&MM^T+(A + M^T(H-P)^{-1}M)^2 = -I_n,
\end{align*}
and this is desired.	\end{proof}

\end{document}